\numberwithin{equation}{section}
\providecommand{\U}[1]{\protect\rule{.1in}{.1in}}
\newtheorem{theorem}{Theorem}[section]
\newtheorem{definition}[theorem]{Definition}
\newtheorem{lemma}[theorem]{Lemma}
\newtheorem{proposition}[theorem]{Proposition}
\newtheorem{remark}[theorem]{Remark}
\begin{document}

\title{Robust Recovery of Stream of Pulses using Convex Optimization}
\author[add1]{Tamir Bendory}
\ead{stdory@campus.technion.ac.il}
\author[add2]{Shai Dekel}
\ead{shai.dekel@ge.com}
\author[add1]{Arie Feuer}
\ead{feuer@ee.technion.ac.il}

\address[add1]{Department of Electrical Engineering, Technion -- Israel Institute of Technology}
\address[add2]{School of mathematical sciences, Tel--Aviv University and GE Global Research}

\date{}
\begin{abstract}

This paper considers the problem of recovering the delays and amplitudes 
of a weighted superposition of pulses.  This problem is motivated by a variety of applications, such as ultrasound and radar.
We show that for univariate and bivariate stream of pulses, one can
recover the delays and weights to any desired accuracy by solving
a tractable convex optimization problem, provided that a pulse-dependent
separation condition is satisfied. The main result of this paper states that the recovery is robust to additive noise
or model mismatch. 
\end{abstract}
\begin{keyword}
stream of pulses, convex optimization, dual certificate, deconvolution, interpolating kernel
\end{keyword}
\maketitle

\section{Introduction\label{Section1}}

In this paper we consider signals of the form
\begin{equation}
y(t)=\sum_{m}c_{m}K_{\sigma}\left(  t-t_{m}\right)  ,\label{1}%
\end{equation}
where $K_{\sigma}(t):=K(\sigma^{-1}t)$ and $\sum_m\vert c_m\vert <\infty$. We assume that the kernel (pulse) $K$ and the scaling $\sigma>0$ are known, whereas the
delays $\left\{  t_{m}\right\}  $ and the real amplitudes $\left\{
c_{m}\right\}  $ are unknown.  The delayed versions of the kernel, $\left\{
K_{\sigma}(t-t_{m})\right\}  _{m}$, are often referred to as \emph{atoms}. In Section \ref{Section2} we discuss specific
requirements for $K$.

An alternative representation of the signal in (\ref{1}) is%
\[
y(t)=\left(  K_{\sigma}\ast x\right)  (t)=\int_{\mathbb{R}}K_{\sigma
}(t-s)dx(s),
\]
where 
\begin{equation}
x(t)=\sum_{m}c_{m}\delta_{t_{m}}(t), \label{2}%
\end{equation}
and $\delta_t$ denotes a Dirac measure. 
This model is quite common in a number of engineering applications such as
ultrasound \cite{tur2011innovation,wagner2012compressed,bendory2015stable}, radar
\cite{Bar-IlanSub-Nyquis} and more (see e.g. \cite{vetterli2002sampling,dragotti2007sampling,schiebinger2015superresolution}). In these applications, we transmit a pulse and measure the received echoes. 
This formulation resembles the problem of super-resolution that has received considerable attention recently \cite{candes2013towards,candes2013super,fernandez2015super,mishra2015}. However, our problem is defined on $\mathbb{R}^n$ whereas the super-resolution problem is defined on the n-dimensional torus. Additionally, we are not restricted to band-limited kernels and can deal with a broader family of convolution kernels as presented in Definition \ref{Definition2}.  

Let $\hat{x}$, $\hat{y}$ and $\hat{K}_{\sigma}$ be the Fourier transforms of $x,y$ and $K_{\sigma}$, respectively. A naive approach will aim to recover $x$ through the relation $\hat{x}=\hat{y}/\hat{K}_{\sigma}$. However, since $x$ is a Dirac sequence and consequently $\hat{x}$ is non-vanishing, this approach will not work for practical decaying kernels even if $\hat{K}_{\sigma}$ is non-vanishing. 
A well-known approach to decompose the signal into its atoms is by using parametric methods such as MUSIC and matrix pencil \cite{stoica2005spectral,hua1990matrix,roy1989esprit,schmidt1986multiple}. These methods do not assume any structure on the signal besides sparsity.
However, they tend to
be unstable in the presence of noise or model mismatch due to sensitivity of polynomial root finding.
As far as we know, they have no guarantees for their robustness (although significant progress has been made recently for the particular case of super-resolution \cite{moitra2014threshold,liao2014music}). 

An alternative way is to utilize compressed sensing and sparse representations theorems, relying on the sparsity of the signal in a discrete basis (e.g. \cite{elad2010sparse,donoho2006compressed}).
 Evidently, signals that have sparse representation in a
continuous dictionary might not have sparse representation after
discretization \cite{chi2011sensitivity}. An obvious
technique to alleviate this basis mismatch is by fine discretization.
However, the aforementioned fields cannot explain the success of $\ell_{1}$
minimization or greedy algorithms as the
dictionaries have high coherence \cite{just_dis}.

Recently, a number of works suggested sparsity-promoting convex optimization
techniques over the continuum. In these works, the notion of coherence does not play any
role. Particularly, it was suggested to recast Total-Variation (TV) and atomic
norm minimization as semi-definite programs (SDP) in order to recover point sources
from low-resolution data on the line
\cite{candes2013towards,candes2013super}, and on the sphere \cite{bendory2013exact,bendory2015super,bendory2015recovery}, or for line spectral estimation \cite{bhaskar2011atomic,tang2012compressive,tang2013near}. Similar approach was applied to
the recovery of non-uniform splines from their projection onto algebraic
polynomial spaces \cite{bendory2013Legendre,de2014non} (see also
\cite{de2012exact,azais2015spike}).

We follow this line of work and employ TV minimization techniques to recover
the delays and amplitudes from the measured data $y(t)$. The aforementioned works exploit the special structure of trigonometric and algebraic polynomials to recast the dual of the infinite dimensional problem as a finite SDP, based on the Riesz-Fejer Thereom \cite{dumitrescu2007positive} (see also an SDP approach for the primal problem in \cite{de2015exact}).  
 However, as far as we know, there is no tractable algorithm solving the TV minimization on $\mathbb{R}$. Therefore, in practice, we consider a sampled version of $y(t)$, which reduces the infinite
dimensional TV minimization to a finite $\ell_{1}$ minimization problem. The
continuous nature of our analysis guarantees that the results hold for any
discretization of the parameter space, with no dependence on the dictionary coherence (see also \cite{tang2013atomic}).  
 The behavior of the discrete optimization problem solution, when the underlying signal is defined on the
continuum (as typically occurs in practical applications) is analyzed in detail in \cite{duval2015exact,duval2015sparse,just_dis}.

The outline of this paper is as follows. After this introduction section, Section \ref{Section2} presents our main results. In Theorems \ref{Theorem1} and \ref{th:2d}
we show that a sufficient condition for a successful recovery  is
that the delays are sufficiently separated. We further establish in Theorem
\ref{Theorem2} that recovery via $\ell_{1}$ minimization is robust to
additive noise or model mismatch, and the error is proportional to the noise level. We note that we make no statistical assumptions on the noise so our results are quite general. 
Section \ref{Section3}, which is the main body of this paper, is dedicated to the presentation of
proofs. In Section \ref{Section4} we present the
results of numerical experiments we have conducted, and in Section \ref{Section5} we draw conclusions and discuss the implications of our results.

\section{Main Results\label{Section2}}

In this section we present our main results. For the benefit of the reader we
provide the formal definition of Total Variation (TV) norm (see
\cite{rudin1986real}):

\begin{definition}
 Let $\mathcal{B}(A)$ be the Borel $\sigma$-Algebra on a space $A$, and denote by $\mathcal{M}(A)$ the associated space of real
Borel measures. The {Total Variation} \emph{(TV)} of a real Borel measure
$v\in\mathcal{M}(A)$ over a set $B\in\mathcal{B}(A)$ is defined by
\[
|v|(B)=\sup\sum_{k}|v(B_{k})|,
\]
where the supremum is taken over all partitions of $B$ into countable disjoint subsets. The total variation $|v|$ is a non-negative
measure on $\mathcal{B}(A)$, and the {Total Variation (TV) norm} of $v$
is defined as
\[
\Vert v\Vert_{TV}=|v|(A).
\]

\end{definition}

The TV norm of a signed measure can be interpreted as
the generalization of $\ell_{1}$ norm to the real line. For a discrete measure of the
form of (\ref{2}), it is easy to see that
\[
\Vert x\Vert_{TV}=\sum_{m}|c_{m}|.
\]

\noindent Next we need the following definitions:

\begin{definition}
\label{Definition1} A set of points $T\subset\mathbb{R}$ is said to satisfy
the minimal separation condition for a kernel dependent $\nu>0$ and a given
$\sigma>0$ if%
\begin{equation*}
\Delta:=\min_{t_{i},t_{j}\in T,t_{i}\neq t_{j}}\left\vert t_{i}-t_{j}%
\right\vert \geq\nu\sigma.
\end{equation*}

\end{definition}

\begin{definition}
\label{Definition2} A kernel $K$ is {admissible} if it
has the following properties:

\begin{enumerate}
\item $K\in \mathcal{C}^3\left(\mathbb{R}\right)  $, is real and even.

\item \underline{Global property:} There exist constants $C_{\ell}>0, \ell=0,1,2,3$ such
that $\left\vert K^{\left(  \ell\right)  }\left(  t\right)  \right\vert
\leq\ C_{\ell} / \left( {1+t^{2}} \right)$ , where $ K^{\left(  \ell\right)  }\left(  t\right)$ denotes the $\ell^{th}$ 
derivative of  $K$.

\item \underline{Local property:} There exist constants $\varepsilon,\beta>0$ such that 
\begin{enumerate}
\item  $K(t)>0$ for all $\vert t\vert \leq \varepsilon$ and $K(t)  < K(\varepsilon)$ for all $\vert t\vert>\varepsilon$,
\item  $K^{\left(2\right)  }\left( t\right)  <-\beta$  for all $\vert t\vert \leq \varepsilon$.
\end{enumerate}

\end{enumerate}
\end{definition}

\begin{remark}
\label{Remark3}  By the Taylor Remainder theorem, for any $0<t \le \varepsilon$, there exists $0<\xi \le t$ such that 
\begin{equation*}
K^{(1)}(t)=K^{(1)}(0)+tK^{(2)}(\xi)=tK^{(2)}(\xi)<0.
\end{equation*} 
Hence, the local property implies that $K$ is monotonically decreasing in $t\in(0,\varepsilon)$.
\end{remark}

\begin{remark}
\label{Remark1} The global property in Definition \ref{Definition2} can be
somewhat weakened to $\left\vert K^{\left(  \ell\right)  }\left(  t\right)
\right\vert \leq C_{\ell} / (1+\vert t\vert ^{1+s})$ for some $s > 0$. In this case, the separation condition would become dependent on
$s$.
\end{remark}

\begin{remark}
\label{Remark4} Two prime examples for an admissible kernel are the Gaussian kernel, $K\left(  t\right)
=e^{-t^{2}/2}$ and the Cauchy kernel, $K(t)= 1/\left({1+t^2}\right)$.
The reader can readily verify that both are admissible. Table \ref{table1} presents the numerical constants $C_\ell$, $\ell=0,1,2,3$ for both kernels. 
\end{remark}

Our first theorem states that one can recover the unknown delays and amplitudes of a given stream of pulses
(\ref{1}) by simply minimizing the TV norm of (\ref{2}).

\begin{theorem}
\label{Theorem1}
For any admissible kernel $K$, there exists $\nu>0$ such that for any $\sigma>0$, the delays and amplitudes of any signal $y$
of the form (\ref{1}), with delays $T:=\{t_{m}\}$ satisfying the
separation condition of Definition \ref{Definition1}, can be recovered as the unique solution of
\begin{equation}
\min_{\tilde{x}\in\mathcal{M}(\mathbb{R})}\left\Vert \tilde{x}\right\Vert
_{TV}\quad \mbox{subject to} \quad y(t)=\int_{\mathbb{R}}K_{\sigma}(t-s)d\tilde{x}\left(
s\right)  ,\label{4}%
\end{equation}
where $\mathcal{M}(\mathbb{R})$ is the space of signed Borel measures on
$\mathbb{R}$.
\end{theorem}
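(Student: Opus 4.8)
The plan is to prove exact recovery by producing a \emph{dual certificate}, following the by-now standard strategy for TV minimization over the continuum. First I would record the convex-duality fact that $x=\sum_m c_m\delta_{t_m}$ is the unique solution of \eqref{4} as soon as one can exhibit a function $q$ lying in the range of the adjoint of the convolution operator $\tilde x\mapsto K_\sigma\ast\tilde x$, satisfying the interpolation identity $q(t_m)=\operatorname{sign}(c_m)$ for every $t_m\in T$ together with the strict bound $|q(t)|<1$ for all $t\notin T$. Since $K_\sigma(t)=K(t/\sigma)$, the change of variables $t\mapsto t/\sigma$ rescales the problem so that it suffices to treat the normalized case $\sigma=1$ with separation $\Delta\ge\nu$; the constant $\nu$ produced there then serves for every $\sigma$.

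Next I would build $q$ explicitly as a combination of the kernel and its first derivative anchored at the support points,
\[
q(t)=\sum_{t_j\in T}\big[\alpha_j\,K(t-t_j)+\beta_j\,K^{(1)}(t-t_j)\big],
\]
which is (formally) the convolution of $K$ with a combination of point masses and their derivatives on $T$, hence of the required form. I would fix the $2\lvert T\rvert$ coefficients $\{\alpha_j,\beta_j\}$ by imposing the $2\lvert T\rvert$ conditions $q(t_m)=u_m:=\operatorname{sign}(c_m)$ and $q^{(1)}(t_m)=0$, the derivative conditions forcing each node to be a critical point of $q$, which is what later makes $\pm1$ a strict local extremum. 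Written as a linear system, the coefficient matrix is a perturbation of the identity: the diagonal $2\times2$ blocks are $\operatorname{diag}(K(0),K^{(2)}(0))$ with $K(0)>0$ and $K^{(2)}(0)<-\beta<0$ (using $K^{(1)}(0)=0$ by evenness), while the off-diagonal entries are governed by $K,K^{(1)},K^{(2)}$ evaluated at gaps $\ge\nu$. Invoking the global decay $\lvert K^{(\ell)}(t)\rvert\le C_\ell/(1+t^2)$ and the separation to sum the resulting convergent series, I would show that for $\nu$ large enough the matrix is strictly diagonally dominant, hence invertible, and that the solution obeys $\alpha_j=u_j/K(0)+O(\nu^{-1})$ and $\beta_j=O(\nu^{-1})$.

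With $q$ in hand, the crux is verifying $\lvert q(t)\rvert<1$ off the support, which I would split into a near regime and a far regime. In the near regime $\lvert t-t_m\rvert\le\varepsilon$ around a node with $u_m=+1$, I would differentiate twice and show $q^{(2)}(t)<0$ throughout: the leading term $\alpha_m K^{(2)}(t-t_m)$ is strictly negative by the local property $K^{(2)}<-\beta$, and the remaining contributions from distant atoms and from the $O(\nu^{-1})$ coefficients $\beta_j$ cannot overturn this sign. Since $q(t_m)=1$ and $q^{(1)}(t_m)=0$, strict concavity then yields $q(t)<1$ on the punctured neighborhood, and the reflected argument handles $u_m=-1$. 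In the far regime, where $t$ lies at distance $>\varepsilon$ from every node, I would bound $\lvert q(t)\rvert$ directly by the tail sum $\sum_j\big(\lvert\alpha_j\rvert\,\lvert K(t-t_j)\rvert+\lvert\beta_j\rvert\,\lvert K^{(1)}(t-t_j)\rvert\big)$, again using decay and separation so that this geometric-type series stays strictly below $1$.

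The main obstacle is the uniform control that makes the two regimes meet: I must choose a single $\nu$, depending only on the admissibility constants $C_\ell,\varepsilon,\beta$ and not on $\sigma$, on $\lvert T\rvert$, or on the sign pattern $\{u_m\}$, that simultaneously renders the interpolation matrix well-conditioned, keeps the perturbation small enough that $q^{(2)}<0$ persists across the entire near regime, and forces the far-field tail strictly below $1$. The delicate part is that all these estimates must hold for arbitrary node configurations --- countably many spikes, reached by a limiting argument from the finite case --- and for every choice of signs, so each bound has to be summed over all nodes using only the separation-driven decay; pinning down an explicit $\nu$ for which this sum is strictly less than $1$, rather than merely at most $1$, is where the real work lies.
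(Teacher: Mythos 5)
Your proposal is correct and follows essentially the same route as the paper's proof: the dual-certificate criterion of Theorem \ref{Theorem3}, the reduction to $\sigma=1$, the interpolant $q(t)=\sum_j\left(a_jK(t-t_j)+b_jK^{(1)}(t-t_j)\right)$ determined by $q(t_j)=v_j$ and $q^{(1)}(t_j)=0$ with coefficients controlled via diagonal dominance of the block Gram system (Proposition \ref{proposition1}), and the near/far case split of Lemmas \ref{lemma1} and \ref{lemma2}. The one step you gloss over is that concavity in the near regime only yields $q<1$; the paper additionally verifies $q>-1$ on $0<|t-t_m|\le\varepsilon$ by a direct tail bound of the same kind you already invoke in the far regime, so this is an omission of detail rather than of ideas.
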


\begin{remark}
We note that the feasible set of  the optimization problem (\ref{4}) is one-element if and only if the Fourier transform of $K_\sigma$ is not vanishing identically on an interval. In this sense, Theorem \ref{Theorem1} is trivial for  kernels whose Fourier transform does not vanish identically on an interval, such as the Gaussian and Cauchy kernels and the signal can be recovered linearly by  $\hat{y}=\hat{K}_\sigma\hat{x}$, where  $\hat{x}$, $\hat{y}$ and $\hat{K}_{\sigma}$ are the Fourier transforms of $x,y$ and $K_{\sigma}$, respectively. In this manuscript, we use Theorem  \ref{Theorem1} as a step towards the main result, which is the robustness under noise or model mismatch as presented in Theorem \ref{Theorem2}. We emphasize that from practical point of view, the stability is of crucial importance as discussed later in this section and demonstrated in Figure \ref{fig:LS}.
\end{remark}

\begin{table}
\begin{center}

    \begin{tabular}{| l | l | l |}
    \hline
   \backslashbox{}{Kernels} & Gaussian $:=e^{\frac{-t^2}{2}}$  & Cauchy $:=\frac{1}{1+t^2}$   \\ \hline
    $C_0$ & 1.22 & 1 
 \\ \hline
  $C_1$ & 1.59 & 1 \\ \hline
   $C_2$ & 2.04 & 2 \\\hline
      $C_3$ & 2.6 & 5.22 \\ \hline
      $K^{(2)}(0)$ & -1 & -2 \\ \hline      
            empirical $\nu$ & 1.1 & 0.45 \\ \hline      
    \end{tabular}
    \end{center}

    \caption{The table presents the numerical constants of the global property in Definition \ref{Definition2} for our two prime examples, Gaussian and Cauchy kernels. Additionally, we evaluated by numerical experiments the minimal empirical value of $\nu$, the separation constant of Definition \ref{Definition1} for each kernel (see also Figure \ref{fig1}).  }
\label{table1}
\end{table}

Table \ref{table1} shows the numerical constants associated with the Gaussian and Cauchy kernels. Moreover, it presents the minimal empirical values of the separation constant $\nu$ (see Definition \ref{Definition1}), as evaluated numerically (see Figure \ref{fig1}). The proof in Section \ref{sec:proof1} reveals the dependence of $\nu$ on the nature of the admissible kernel. For instance, (\ref{estimate-nu-1}) shows that small $\vert K^{(2)}(0)\vert$ (that is, flatness near the origin) requires a larger separation constant $\nu$ (see also e.g. equations (\ref{estimate-nu-2}),(\ref{estimate-nu-3})).

We want to emphasize that our model can be extended to other types of underlying signals, not necessarily a spike train as in (\ref{2}). For instance, suppose that the underlying signal itself is a stream of pulses of the form $\sum_mc_m\grave{K}_\sigma(t-t_m)$. In this case, the measurements are given as
$y(t)=\left(K_\sigma\ast\grave{K}_\sigma\ast x\right)(t)$, 
where $x(t)$ is a signal of the form of (\ref{2}). Therefore, our results hold immediately if the convolution kernel
$\tilde{K}(t)=\left({K}\ast \grave{K}\right)(t)$ meets the definition of admissible kernel and $T$ satisfies the associated separation condition.
For instance, if K and $\grave{K}$ are both Gaussian kernels with standard deviations of $\sigma_1$ and $\sigma_2$, then $\tilde{K}$ is also
Gaussian with standard deviation of $\sigma=\sqrt{\sigma_1^2+\sigma_2^2}$ and thus obeys the definition of admissible kernel. 
 
The univariate result can be extended to bivariate signals. Consider a signal of the form 
\begin{equation} \label{eq:2d_signal}
y\left(t,u\right)=\sum_{m}c_{m}K_{2,\sigma}\left(t-t_{m},u-u_{m}\right),
\end{equation}
where $K_{2,\sigma}:=K_{2}\left(\sigma_{t}^{-1}t,\sigma_{u}^{-1}u\right),\thinspace \sigma_{t},\sigma_{u}>0$ and $K_2$ is a bivariate kernel.
The following are the equivalent of Definitions \ref{Definition1} and \ref{Definition2} in the bivariate case:

\begin{definition}
\label{def:separation_2d} A set of points $T_{2}\subset\mathbb{R}^{2}$
is said to satisfy the minimal separation condition for a kernel dependent
$\nu>0$ and a given $\sigma_{t},\sigma_{u}>0$ if 
\[
\Delta:=\min_{\left(t_{i},u_{i}\right),\left(t_{j},u_{j}\right)\in T_{2},i\neq j}\max\left\{ \frac{\vert t_{i}-t_{j}\vert }{\sigma_{t}},\frac{\vert u_{i}-u_{j}\vert }{\sigma_{u}}\right\} \geq\nu.
\]
\end{definition}

\begin{definition}
\label{def:2d_kernel}A bivariate kernel $K_{2}$ is {admissible} if it has the following properties:
\begin{enumerate}
\item $K_{2}\in \mathcal{C}^3\left(\mathbb{R}^2\right)$, is real and even, that is 
\[
K_{2}\left(t,u\right)=K_{2}\left(-t,u\right)=K_{2}\left(t,-u\right)=K_{2}\left(-t,-u\right).
\]

\item \underline{Global property}: There exist constants $C_{\ell_{1},\ell_{2}}>0$
such that $\left|{K_{2}^{(\ell_1,\ell_2)}(t,u)}\right|\le\frac{C_{\ell_{1},\ell_{2}}}{\left(1+t^{2}+u^{2}\right)^{3/2}}$, for $\ell_1+\ell_2\leq 3$, where $K_{2}^{(\ell_1,\ell_2)}(t,u):=\frac{\partial^{\ell_1}\partial^{\ell_2}}{\partial t^{\ell_1}\partial u^{\ell_2}}K_{2}\left(t,u\right)$.
\item \underline{Local property}: There exist constants $\beta,\varepsilon>0$ such that
\begin{enumerate}
\item  $K_2(\varepsilon,0),K_2(0,\varepsilon)>0$, $K_2(t,u)<K(\varepsilon,0)$ for all $(t,u)$ satisfying $|t|>\varepsilon$, and $K_2(t,u)<K(0,\varepsilon)$ for all $(t,u)$ satisfying $|u|>\varepsilon$ .

\item $K_{2}^{(2,0)}(t,u),K_{2}^{(0,2)}(t,u)<-\beta$ for all $(t,u)$ satisfying $|t|,|u| \leq \epsilon$.

\end{enumerate}

 \end{enumerate} 
 \end{definition}

\begin{theorem}
\label{th:2d}
For any bivariate admissible kernel $K_2$, there exists $\nu>0$  such that for any $\sigma_{t},\sigma_{u}>0$ and a signal of the form (\ref{eq:2d_signal}) with delays $T_2:=\{t_{m},u_m\}$ satisfying the separation condition of Definition \ref{def:separation_2d}, $\left\{
c_{m}\right\}  $ and  $\{t_{m},u_{m}\}$  \textup{\emph{are uniquely defined
by the solution of}}
\[
\min_{\tilde{x}\in\mathcal{M}(\mathbb{R}^2)}\left\Vert \tilde{x}\right\Vert _{TV}\quad \mbox{subject to} \quad y(t,u)=\int_{\mathbb{R}^{2}}K_{2,\sigma}\left(t-s_{1},u-s_{2}\right)d\tilde{x}\left(s_{1},s_{2}\right),
\]
where $\mathcal{M}(\mathbb{R}^2)$ is the space of signed Borel measures
on $\mathbb{R}^{2}$.
\end{theorem}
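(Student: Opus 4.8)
The plan is to mirror the dual-certificate argument behind Theorem \ref{Theorem1}, lifting each one-dimensional ingredient to the bivariate setting. First I would record the standard duality reduction: writing the forward map as $\mathcal{A}\tilde x(t,u)=\int_{\mathbb{R}^2}K_{2,\sigma}(t-s_1,u-s_2)\,d\tilde x(s_1,s_2)$, the measure $x=\sum_m c_m\delta_{(t_m,u_m)}$ is the \emph{unique} minimizer of the TV problem provided there exists a \emph{dual certificate} $q$ lying in the range of $\mathcal{A}^\ast$ — equivalently, a combination of shifted copies of $K_{2,\sigma}$ and their first-order derivatives — satisfying $q(t_m,u_m)=\mathrm{sign}(c_m)$ for every $m$ and $|q(t,u)|<1$ for every $(t,u)\notin T_2$. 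After the rescaling $t\mapsto t/\sigma_t$, $u\mapsto u/\sigma_u$ (which turns Definition \ref{def:separation_2d} into plain $\ell^\infty$-separation by $\nu$ and reduces $K_{2,\sigma}$ to $K_2$), it suffices to produce such a $q$ for $\sigma_t=\sigma_u=1$.

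Next I would construct $q$ by Hermite-type interpolation. Because each support point now carries two first-order derivative constraints rather than one, I would use the ansatz
\[
q(t,u)=\sum_{k}\Big[\alpha_k\,K_2(t-t_k,u-u_k)+\beta_k\,K_2^{(1,0)}(t-t_k,u-u_k)+\gamma_k\,K_2^{(0,1)}(t-t_k,u-u_k)\Big],
\]
fixing the coefficients $\{\alpha_k,\beta_k,\gamma_k\}$ by imposing, at each support point $(t_m,u_m)$, the three conditions $q(t_m,u_m)=\mathrm{sign}(c_m)$, $q^{(1,0)}(t_m,u_m)=0$, and $q^{(0,1)}(t_m,u_m)=0$. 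The vanishing-gradient conditions make every support point a critical point of $q$, which is precisely what will later force a strict interior extremum of $|q|$. Collecting these equations into a block linear system, I would show solvability for all sufficiently large $\nu$: the diagonal blocks are governed by $K_2(0,0)$ and the second derivatives $K_2^{(2,0)}(0,0),K_2^{(0,2)}(0,0)$, while the off-diagonal blocks consist of kernel values and derivatives evaluated at separations of size at least $\nu$. The global decay property of Definition \ref{def:2d_kernel} bounds every off-diagonal entry by a summable tail, so for large $\nu$ the system is a small perturbation of its invertible diagonal part; inverting via a Neumann series yields $\alpha_k=\mathrm{sign}(c_k)+O(\nu^{-1})$ and $\beta_k,\gamma_k=O(\nu^{-1})$, with explicit control of all coefficients. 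This step also exposes the quantitative dependence of $\nu$ on the kernel constants described after Theorem \ref{Theorem1}.

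Finally I would establish $|q|<1$ off the support by partitioning $\mathbb{R}^2$ into $\ell^\infty$-balls of radius comparable to $\varepsilon$ around each $(t_k,u_k)$ together with the complementary far region. In the far region I would sum the global decay bounds over all atoms and all three terms of the ansatz, using the coefficient estimates of the previous step, to obtain $|q|<1$ once $\nu$ is large. Inside a neighborhood of $(t_m,u_m)$ I would expand $q$ to second order: the gradient vanishes by construction, while the Hessian is dominated by the leading atom, whose pure second derivatives lie below $-\beta$ by the local property. I would then bound the mixed derivative $K_2^{(1,1)}$ and the cross-contributions of the remaining atoms through the global constants and the coefficient estimates, and verify that for large $\nu$ the full Hessian of $q$ (respectively of $-q$ when $\mathrm{sign}(c_m)=-1$) remains negative definite. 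This makes $(t_m,u_m)$ a strict local maximum of $|q|$ with value $1$, so $|q|<1$ on the punctured neighborhood, completing the certificate and hence the proof.

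I expect this bivariate Hessian analysis near the support points to be the main obstacle. Unlike the univariate case, negativity of a single second derivative no longer suffices: keeping the full $2\times 2$ Hessian negative definite requires simultaneously dominating the mixed partial $K_2^{(1,1)}$ and the cross-terms from the non-leading atoms, neither of which is controlled directly by the local property, so both must be absorbed into the separation constant $\nu$.
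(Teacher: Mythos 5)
Your proposal follows essentially the same route as the paper: the dual-certificate reduction, the Hermite-type ansatz with one value and two vanishing-gradient conditions per support point solved by a perturbation argument on the block Gram system (the paper uses Schur complements and $\ell_\infty$ bounds, plus a rectangular-ring packing count to make the off-diagonal sums summable under the $(1+t^2+u^2)^{-3/2}$ decay), and the near/far split with negative-definiteness of the Hessian near the support, where the mixed partial is tamed exactly as you anticipate, via $K_2^{(1,1)}(0,0)=0$. The one caveat is in your far-region step: the nearest atom's contribution must be controlled by the local property $K_2(t,u)<\min\{K_2(\varepsilon,0),K_2(0,\varepsilon)\}$ outside the $\varepsilon$-neighborhood rather than by ``summing the global decay bounds,'' since the global bound evaluated at distance $\varepsilon$ gives roughly $C_{0,0}/(1+\varepsilon^2)^{3/2}$, which need not fall below $K_2(0,0)$.
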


As in the univariate case, the bivariate separation constant $\nu$ depends on the parameters of the bivariate kernel $K_2$. 
Again, flatness of $K_2$ at the origin implies the need for greater separation (see e.g. equations (\ref{cond-nu-biv}), (\ref{cond-nu-biv-2}),(\ref{cond-nu-biv-3})).

In practice, the measured signal is contaminated by noise and does not fit
exactly the above models. In this case, without a separation condition, the decomposition can not be stable by any method. To see this, consider some constants 
$t_\varepsilon,t_{0}>0$ and a signal of the form $y(t)=g(t_{0}-t)-g(t_{0}+t_\varepsilon-t)$,
where $g(t)$ is a Gaussian kernel. Then,
\begin{align*}
y(t)  & =e^{-\frac{1}{2}\left(  \frac{t_{0}-t}{\sigma}\right)  ^{2}}%
-e^{-\frac{1}{2}\left(  \frac{t_{0}+t_\varepsilon-t}{\sigma}\right)  ^{2}%
}=e^{-\frac{t^{2}}{2\sigma^{2}}}\left(  e^{-\frac{t_{0}^{2}}{2\sigma^{2}}%
}e^{\frac{t_{0}t}{\sigma^{2}}}-e^{-\frac{\left(  t_{0}+t_\varepsilon\right)  ^{2}%
}{2\sigma^{2}}}e^{\frac{\left(  t_{0}+t_\varepsilon\right)  t}{\sigma^{2}}}\right)
\\
& \leq e^{-\frac{t^{2}}{2\sigma^{2}}}e^{\frac{\left(  t_{0}+t_\varepsilon\right)
t}{\sigma^{2}}}\left(  e^{-\frac{t_{0}^{2}}{2\sigma^{2}}}-e^{-\frac{\left(
t_{0}+t_\varepsilon\right)  ^{2}}{2\sigma^{2}}}\right)  .
\end{align*}
Clearly, as $t_\varepsilon\rightarrow0$, $y(t)$ decays rapidly to zero for any
$t_{0}$ and $t$. Thus, even if the signal is contaminated with a minuscule
amount of noise or model error, there is no hope to recover $\left\{
t_{m}\right\}  $ and $\left\{  c_{m}\right\}  $. 

As aforementioned in Section \ref{Section1}, there is no tractable algorithm solving (\ref{4}).  Therefore, in addressing the noisy case we consider the sampled version of the problem in which the TV norm reduces to $\ell_1$ norm. For convenience, we focus here on the univariate model, however a similar result holds in the bivariate case. 

Let us assume the sampling interval to be $1/N$ for a given integer $N$, and that the  delays $T$ 
lie on the grid $k/N$, $k\in\mathbb{Z}$, i.e. $t_m=k_m/N$ for some $k_m\in\mathbb{Z}$.
Then, with $K_\sigma[k]:= K_\sigma(k/N)$, we obtain a discrete form of the stream of pulses
\[
y[k]:=y\left({\frac{k}{N}}\right)=\sum_m {c_m K_\sigma\left({\frac{k-k_m}{N}}\right)} = \sum_m {c_m K_\sigma[k-k_m]}.
\]
The discrete noisy model we consider is given by 
\begin{equation}
y[k]  =\left(  K_{\sigma}\ast\left(  x+n\right)  \right) [k]  ,\quad\left\Vert K_{\sigma}\ast n\right\Vert _{1}\leq\delta
,\label{5}%
\end{equation}
where $'\ast '$ denotes a discrete convolution, $x[k]:=x(k/N)$ is the underlying true superposition of delays, and $n:=\{n[k]\}$ is an additive noise or model mismatch. The discrete system can be presented in a matrix notation as 
\begin{equation*}
y=\mathbf{K}x+n,
\end{equation*} 
where $\mathbf{K}$ is the convolution matrix. This matrix is guaranteed to be invertible in many cases, such as for the Gaussian kernel, and hence one may consider estimating $x$ by solving the linear system of equations.  However, as the stability of the linear system depends linearly on the condition number of $\mathbf{K}$ (see for instance Section 4.3 in \cite{beck2014introduction}), this method will be stable only if the sampling step is large and thus will suffer from severe restrictions on the attainable  resolution. Figure \ref{fig:LS} shows the recovery of a signal from (\ref{5}) using $\ell_1$ minimization and least-squares (LS) in a noise-free setting with Gaussian kernel with standard deviation of $\sigma=0.1$ and sampling step of $0.01$. As can be seen, while the $\ell_1$ minimization perfectly recovers the signal according to Theorem \ref{Theorem1}, the LS approach  fails totally. We emphasize that the experiment was performed in a noise-free setting (i.e. $n=0$) and therefore the recovery failure of the LS is due to the amplification of the computer numerical errors. We further present the condition number of the convolution matrix $\mathbf{K}$ in the case of Gaussian kernel as a function of the discretization step. As can be seen, the condition number grows exponentially as the sampling interval gets smaller.

\begin{figure}[h!]
\begin{center}$
\begin{array}{cc}
\includegraphics[scale=0.4]{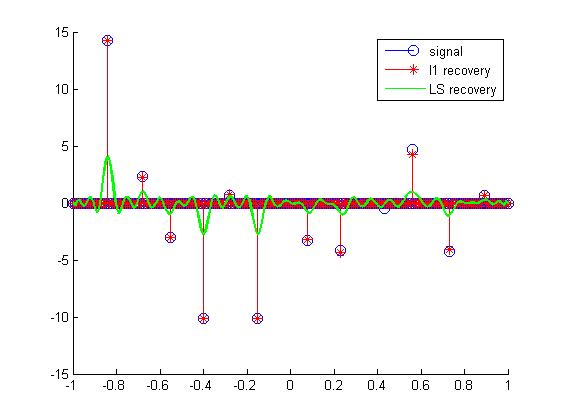}&
\includegraphics[scale=0.4]{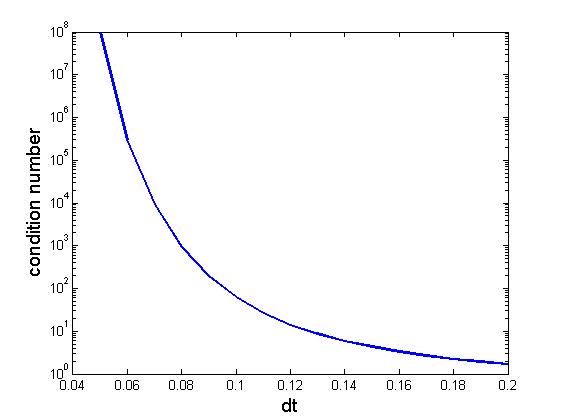} \\
\mbox{(a)} & \mbox{(b)}
\end{array}$
\end{center}
\protect\caption{\label{fig:LS} (a) Recovery of a signal from its convolution with Gaussian kernel with $\sigma=0.1$ using $\ell_1$ minimization and least-squares in a noise-free setting. The signal has a separation constant of $\nu=1.2$. The discretization step is 0.01 and the associated condition number of the convolution matrix in this case is $3.4697\cdot 10^{18}$. (b) The condition number of the convolution matrix of Gaussian kernel with $\sigma=0.1$ as a function of the discretization step $dt$ on the interval $[-1,1]$.}
\end{figure}

We suggest to estimate $x$ by the relaxed $\ell_{1}$ program
\begin{equation}
min_{\tilde{x}}\left\Vert \tilde{x}\right\Vert _{1}\quad \mbox{subject to} \quad\left\Vert
y-K_{\sigma}\ast\tilde{x}\right\Vert _{1}\leq\delta.\label{6}%
\end{equation}
The following result shows that the error is proportional to the noise level
$\delta$.

\begin{theorem}
\label{Theorem2} Consider the model \emph{(\ref{5}) } for an admissible kernel $K$. If $T$ satisfies
the separation condition of Definition \ref{Definition1} for  $\sigma >0$, then the  solution
$\hat{x}$ of {(\ref{6}) }obeys

\begin{equation*}
\left\Vert \hat{x} - x \right\Vert _{1}\leq\frac{72K(0)\left\vert K^{(2)}(0)\right\vert\gamma^2}{9\beta K(0)\left\vert K^{(2)}(0)\right\vert-D_1\nu^{-2}-D_2\nu^{-4}}\delta,
\end{equation*}
where 
\begin{eqnarray*}
\gamma & := & \max \{ N\sigma , \varepsilon^{-1} \}, \\
D_{1} & := & 3\pi^2 \left(C_2\beta +2C_0\beta  +8C_1^2K(0)\gamma^2 \right) ,\\
D_{2} & := &4\pi^4C_1^2K(0)\gamma^2 .
\end{eqnarray*}
Therefore, for sufficiently large $\nu$, we obtain 
\[
\left\Vert  \hat{x} - x \right\Vert _{1} \le \frac{16\gamma^2}{\beta }\delta.
\]
\end{theorem}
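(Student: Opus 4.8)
The plan is to prove a stability estimate by pairing the error $h:=\hat{x}-x$ against the dual certificate already built for Theorem~\ref{Theorem1}. First I would check that the true signal $x$ is itself feasible for (\ref{6}): since $y-\mathbf{K}x=K_{\sigma}\ast n$, the hypothesis $\Vert K_{\sigma}\ast n\Vert_{1}\le\delta$ in (\ref{5}) gives $\Vert y-\mathbf{K}x\Vert_{1}\le\delta$, so optimality of $\hat{x}$ yields $\Vert\hat{x}\Vert_{1}\le\Vert x\Vert_{1}$. A triangle inequality on the two $\delta$-constraints then gives the data-fidelity bound $\Vert\mathbf{K}h\Vert_{1}=\Vert K_{\sigma}\ast h\Vert_{1}\le 2\delta$. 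The goal is to convert these two facts into $\Vert h\Vert_{1}\le C\delta$ with $C$ the stated ratio.

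Second, I would import the discrete dual certificate $q=\mathbf{K}^{\top}v$ produced for Theorem~\ref{Theorem1}, together with its quantitative features: (i) an explicit bound on $\Vert v\Vert_{\infty}$; (ii) exact interpolation $q[k_m]=\operatorname{sign}(c_m)$ on the support $S:=\{k_m\}$; (iii) a \emph{quadratic dip} of the form $1-q[k]\operatorname{sign}(c_m)\ge \tfrac{b}{2}(k/N-t_m)^2$ on each local neighbourhood $|k/N-t_m|\le\varepsilon\sigma$, whose curvature $b$ is bounded below in terms of $\beta$ and $|K^{(2)}(0)|$ once the interference of the distant atoms (estimated through the global decay constants $C_0,C_1,C_2$ and the separation $\nu$) is subtracted --- this is exactly where the numerator $K(0)|K^{(2)}(0)|$ and the corrections $D_1\nu^{-2}$, $D_2\nu^{-4}$ originate; and (iv) a uniform gap $|q[k]|\le 1-\eta$ away from the neighbourhoods of $T$. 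Because $q$ lies in the range of $\mathbf{K}^{\top}$, the pairing collapses to $\langle q,h\rangle=\langle v,\mathbf{K}h\rangle$, whence $|\langle q,h\rangle|\le\Vert v\Vert_{\infty}\Vert\mathbf{K}h\Vert_{1}\le 2\delta\Vert v\Vert_{\infty}$.

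Third, I would combine these through the standard variational inequality. Writing $\Vert\hat{x}\Vert_{1}=\Vert x+h\Vert_{1}$ and bounding each summand below by its pairing against $q$ gives $0\ge\langle q,h\rangle+\sum_{k\notin S}(1-|q[k]|)\,|h[k]|$, so the weighted off-support mass obeys $\sum_{k\notin S}(1-|q[k]|)|h[k]|\le 2\delta\Vert v\Vert_{\infty}$. Splitting the grid into the far region (weight $\ge\eta$) and the local neighbourhoods (weight $\ge\tfrac{b}{2}(k/N-t_m)^2$) controls $\Vert h\Vert_{1}$ off $S$; the worst case is a grid point adjacent to some $t_m$, at distance as small as $1/N$ or $\varepsilon\sigma$, which is precisely what produces the factor $\gamma^{2}=\max\{N\sigma,\varepsilon^{-1}\}^{2}$. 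The on-support values $h[k_m]$ are then recovered from $\Vert\mathbf{K}h\Vert_{1}\le2\delta$, using the separation of $T$ to invert $\mathbf{K}$ locally on the well-separated peaks. Assembling the constants yields the displayed bound, and the simplified estimate follows by taking $\nu$ large enough that $D_1\nu^{-2}+D_2\nu^{-4}\le\tfrac12\cdot 9\beta K(0)|K^{(2)}(0)|$, so that the denominator is at least $\tfrac{9}{2}\beta K(0)|K^{(2)}(0)|$ and the prefactor collapses to $16\gamma^{2}/\beta$.

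The hard part will be the second step: establishing the quadratic lower bound on the certificate's dip with fully explicit constants, and in particular showing that the genuine curvature contribution $9\beta K(0)|K^{(2)}(0)|$ strictly dominates the interference $D_1\nu^{-2}+D_2\nu^{-4}$ once $\nu$ is large. This demands second-order Taylor control of $q=K_{\sigma}\ast v$ near each peak together with a tail summation of the derivatives of $K_{\sigma}$ over all remaining delays --- exactly the estimates the proof of Theorem~\ref{Theorem1} must already supply. The remaining bookkeeping (the far-field gap $\eta$, the local inversion on $S$, and the discretization factor $\gamma$) is then essentially routine.
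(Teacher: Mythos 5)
Your overall strategy --- pair $h=\hat{x}-x$ against the dual certificate from Theorem \ref{Theorem1}, use the quadratic dip near $T$ and the uniform gap away from $T$ to produce the $\gamma^2$ factor, and close with the cone condition $\Vert h_{T^C}\Vert_1\le\Vert h_T\Vert_1$ --- is indeed the paper's strategy. But your second step contains a genuine gap. The certificate of Proposition \ref{proposition1} is $q(t)=\sum_m \bigl(a_mK(t-t_m)+b_mK^{(1)}(t-t_m)\bigr)$; it is \emph{not} of the form $\mathbf{K}^{\top}v$ for a bounded discrete $v$, because the $K^{(1)}$ shifts correspond to convolving $K$ with derivatives of Diracs, not with grid sequences of controlled $\ell_\infty$ norm. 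Hence the pairing does not collapse to $\langle v,\mathbf{K}h\rangle\le 2\delta\Vert v\Vert_\infty$. The paper must treat the two pieces asymmetrically: the $a$-part is bounded by $2\delta\Vert\mathbf{a}\Vert_\infty$ via $\Vert K_\sigma\ast h\Vert_1\le2\delta$, but the $b$-part can only be bounded by $2C_1(1+E(\nu))\Vert\mathbf{b}\Vert_\infty\Vert h\Vert_1$ --- a term proportional to the unknown $\Vert h\Vert_1$ that has to be moved to the left-hand side and absorbed. That absorption is precisely where the $8C_1^2K(0)\gamma^2$ and $4\pi^4C_1^2K(0)\gamma^2$ contributions to $D_1$ and $D_2$ come from; you attribute all of $D_1,D_2$ to interference in the curvature estimate, which accounts only for the $C_2\beta+2C_0\beta$ piece. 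Without a replacement for the missing bound on $\langle q,h\rangle$ (the paper's self-referential inequality, or else a finite-difference surrogate for $K^{(1)}$ with its own error analysis), your argument does not close.

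A second, smaller divergence: you build the certificate to interpolate $\operatorname{sign}(c_m)$ and then propose to recover the on-support error $h_T$ by ``locally inverting $\mathbf{K}$'' from $\Vert\mathbf{K}h\Vert_1\le2\delta$. That step is not routine --- it requires a quantitative bound on the restricted inverse and reintroduces a factor of order $\Vert K_\sigma\Vert_{\ell_1}\sim N\sigma$, so it would not reproduce the stated constants. The paper sidesteps it entirely by choosing the certificate to interpolate $\operatorname{sign}(h_T[k_m])$ (Theorem \ref{Theorem3} permits an arbitrary sign pattern), so that $\langle q_\sigma,h_T\rangle=\Vert h_T\Vert_1$ exactly and the lower bound $\Vert h_T\Vert_1-\max_{k\notin T}\vert q_\sigma[k]\vert\,\Vert h_{T^C}\Vert_1$ together with $\Vert h_{T^C}\Vert_1\le\Vert h_T\Vert_1$ controls all of $\Vert h\Vert_1\le2\Vert h_T\Vert_1$ in one stroke. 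Your feasibility argument, the $\gamma^2=\max\{N\sigma,\varepsilon^{-1}\}^2$ accounting, and the final simplification for large $\nu$ are all correct.
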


As can be seen, high value of $\beta$ results in small recovery error. So kernels which are flat near the origin will be less stable in a noisy environment.

In a consecutive paper, it was shown that the solution of (\ref{6}) is also clustered around the true support \cite{bendory2015stable}. The case of non-negative stream of pulses, i.e. $c_m>0$, was analyzed in \cite{bendory2015robust}. In this work it was proven that the separation is unnecessary in this case and can be replaced by the notion of Rayleigh regularity.  
 
\section{Proof of main results\label{Section3}}

The main pillar of the forthcoming proofs is a duality theorem which is a variant of the `dual certificate' theorems of \cite{bendory2013exact}, \cite{bendory2013Legendre} and \cite{candes2013towards}.

\subsection{The duality theorem}

\begin{theorem}
\label{Theorem3}Let $x(t)=\sum_{m}c_{m}\delta_{t_{m}}(t)$, \textup{$c_{m}%
\in\mathbb{R}$} where $T:=\{t_{m}\}\subseteq\mathbb{R}$, and let
$y(t)=\int_{\mathbb{R}}K(t-s)dx\left(  s\right)  $ for an even, $L$
times differentiable kernel $K(t)$. If for any set $\{v_{m}\}$, $v_{m}%
\in\mathbb{R} $, with $|v_{m}|=1$, there exists a function of the form
\begin{equation}
q(t)=\int_{\mathbb{R}}\sum_{\ell=0}^{L}K^{(\ell)}(t-s)d\mu_{\ell}\left(
s\right)  ,\label{7}%
\end{equation}
for some measures \textup{$\left\{  \mu_{\ell}\left(  t\right)  \right\}
_{\ell=0}^{L}$}, satisfying
\begin{align}
q(t_{m}) &  =v_{m}\,,\,\forall t_{m}\in T,\label{8}\\
|q(t)| &  <1\,,\,\forall t\in\mathbb{R}\backslash T,\label{9}%
\end{align}
then $x$ is the unique real Borel measure solving
\begin{equation}
\min_{\tilde{x}\in\mathcal{M\left(  \mathbb{R}\right)  }}\Vert\tilde{x}%
\Vert_{TV}\quad\mbox{subject to}\quad y(t)=\int_{\mathbb{R}}K(t-s)d\tilde
{x}\left(  s\right)  .\label{10}%
\end{equation}

\end{theorem}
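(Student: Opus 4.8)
The plan is to run the standard dual-certificate (complementary-slackness) argument, with the one genuinely new ingredient being a verification that the special form \eqref{7} of $q$ still makes the pairing $\int q\,d\tilde x$ blind to the choice of feasible $\tilde x$. Fix the sign pattern $v_m=\operatorname{sign}(c_m)$ and let $q$ be the associated interpolating function guaranteed by the hypothesis, so that $q(t_m)=v_m$ on $T$ by \eqref{8} and $|q(t)|<1$ off $T$ by \eqref{9}. I would first note that $x$ itself is feasible by the definition of $y$, and then split the proof into (i) optimality, $\Vert x\Vert_{TV}\le\Vert\tilde x\Vert_{TV}$ for every feasible $\tilde x$, and (ii) uniqueness, that equality forces $\tilde x=x$.

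The heart of the argument --- and the step I expect to be the main obstacle --- is the following invariance claim: for every feasible $\tilde x$ of finite TV norm, the quantity $\int_{\mathbb R}q\,d\tilde x$ depends only on the data $y$, not on $\tilde x$. This is precisely what the derivative terms $K^{(\ell)}$ in \eqref{7} are designed to buy, and it is where the evenness and differentiability of $K$ enter. I would establish it by Fubini,
\[
\int_{\mathbb R}q(t)\,d\tilde x(t)=\sum_{\ell=0}^{L}\int_{\mathbb R}\left(\int_{\mathbb R}K^{(\ell)}(t-s)\,d\tilde x(t)\right)d\mu_\ell(s),
\]
and then using that $K$ even makes $K^{(\ell)}$ have parity $(-1)^\ell$, so that $\int_{\mathbb R}K^{(\ell)}(t-s)\,d\tilde x(t)=(-1)^\ell y^{(\ell)}(s)$ with $y^{(\ell)}=(K\ast\tilde x)^{(\ell)}$. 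Hence
\[
\int_{\mathbb R}q\,d\tilde x=\sum_{\ell=0}^{L}(-1)^\ell\int_{\mathbb R}y^{(\ell)}(s)\,d\mu_\ell(s),
\]
which is manifestly a functional of $y$ alone. The technical care here is to justify the interchange of integration and the transfer of derivatives onto $y$: I would invoke the differentiability of $K$ together with the decay of $K^{(\ell)}$ available for admissible kernels (the global property) and the finiteness of $\tilde x$ and $\mu_\ell$ to guarantee absolute convergence of the double integral. Applying this with $\tilde x=x$ gives $\int q\,d\tilde x=\int q\,dx$ for every feasible $\tilde x$.

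Optimality then follows immediately: with $v_m=\operatorname{sign}(c_m)$,
\[
\Vert x\Vert_{TV}=\sum_m|c_m|=\sum_m c_m\,q(t_m)=\int_{\mathbb R}q\,dx=\int_{\mathbb R}q\,d\tilde x\le\int_{\mathbb R}|q|\,d|\tilde x|\le\Vert\tilde x\Vert_{TV},
\]
where the last inequality uses $|q|\le 1$. Thus $x$ solves \eqref{10}.

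For uniqueness, suppose $\tilde x$ is another optimizer, so that all the inequalities above are equalities. Equality in $\int|q|\,d|\tilde x|=\Vert\tilde x\Vert_{TV}=\int d|\tilde x|$ gives $\int_{\mathbb R}(1-|q|)\,d|\tilde x|=0$; since $1-|q|>0$ strictly off $T$, the measure $|\tilde x|$ must be concentrated on $T$, i.e. $\tilde x=\sum_m\tilde c_m\delta_{t_m}$. Setting $w=\tilde x-x$, which is supported on $T$ and satisfies $K\ast w=y-y=0$, it remains to deduce $w=0$. I would obtain this from injectivity of convolution on measures supported on $T$: on the Fourier side $\hat K(\xi)\sum_m w_m e^{-i\xi t_m}=0$, and since a nonzero kernel has $\hat K\neq 0$ on a set of positive measure while the exponential sum $\sum_m w_m e^{-i\xi t_m}$ is analytic (hence has a zero set of measure zero unless it vanishes identically), all coefficients $w_m$ must vanish. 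For infinite $T$ the same conclusion follows from absolute summability of $\{w_m\}$ and the separation of $T$. Hence $\tilde x=x$, which completes the proof.
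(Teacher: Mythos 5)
Your optimality argument and the invariance claim $\int q\,d\tilde x=\int q\,dx$ for feasible $\tilde x$ are sound and match the mechanism of the paper's proof (the paper performs the same Fubini/parity computation to show $\langle q,h\rangle=0$ for $h=\hat x-x$). Likewise your support-identification step, $\int(1-|q|)\,d|\tilde x|=0$ forcing $|\tilde x|$ to concentrate on $T$, is correct. The genuine gap is the very last step: having reduced to $w=\tilde x-x$ supported on $T$ with $K\ast w=0$, you conclude $w=0$ by a Fourier argument. This does not work under the theorem's hypotheses. Theorem \ref{Theorem3} assumes only that $K$ is even and $L$ times differentiable; nothing guarantees that $\hat K$ is a function, let alone that its zero set is small, and the paper's own remark after Theorem \ref{Theorem1} points out that the feasible set of (\ref{4}) fails to be a singleton precisely when $\hat K_\sigma$ vanishes identically on an interval --- so the theorem is expected to cover kernels for which your ``$\hat K\neq 0$ on a set of positive measure, hence $F=0$ there'' step only yields $F=0$ on some open set. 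From that you need $F\equiv 0$, which requires analyticity of $F(\xi)=\sum_m w_m e^{-i\xi t_m}$; that holds for finite $T$ but fails for infinite $T$ (with unbounded $t_m$ the sum is merely a bounded almost-periodic function on $\mathbb{R}$, not the restriction of an entire function, and vanishing on an interval does not force it to vanish identically). Your closing sentence ``the same conclusion follows from absolute summability and separation'' is an assertion, not an argument.

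The repair is short and stays entirely inside the hypotheses: apply the dual certificate a second time, now with the sign pattern of $w$ itself. Choose $v_m=\operatorname{sgn}(w_m)$ (arbitrarily $\pm 1$ where $w_m=0$) and let $q_w$ be the corresponding interpolant. Since $K\ast w=0$ and hence $K^{(\ell)}\ast w=0$ by your own differentiation-under-the-integral argument, the invariance computation gives $0=\langle q_w,w\rangle=\sum_m q_w(t_m)w_m=\sum_m|w_m|=\Vert w\Vert_{TV}$, so $w=0$. This is in essence what the paper does from the outset: it works with $h=\hat x-x$, splits $h=h_T+h_{T^C}$, and chooses $v_m=\operatorname{sgn}(h_T(t_m))$, so that $0=\langle q,h\rangle=\Vert h_T\Vert_{TV}+\langle q,h_{T^C}\rangle$ together with the strict bound $|q|<1$ off $T$ yields $\Vert h_{T^C}\Vert_{TV}>\Vert h_T\Vert_{TV}$ and a contradiction with minimality. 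The lesson is that the certificate's availability for \emph{every} sign pattern is exactly the injectivity statement you were trying to import from Fourier analysis; you should use it rather than properties of $\hat K$ that were never assumed.
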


\begin{proof}
Let $\hat{x}$ be a solution of (\ref{10}), and define $\hat{x}=x+h$. The
difference measure $h$ can be decomposed relative to $|x|$ as
\[
h=h_{T}+h_{T^{C}},
\]
where $h_{T}$ is supported in $T$, and $h_{T^{C}}$ is supported in
$T^{C}$ (the complementary of $T$). 
If $h_T=0$, then also $h_{T^c}=h = 0$. Otherwise, $\Vert\hat{x}
\Vert_{TV}>\Vert{x}
\Vert_{TV}$ which is a contradiction. If $h_T\neq 0$, we 
perform a polar decomposition of
$h_{T}$ 
\[
h_{T}=|h_{T}|sgn(h_{T}),
\]
where $sgn(z):=\frac{z}{\vert z \vert}$. By assumption, for any $0\leq\ell\leq L$
\[
\int_{\mathbb{R}}K^{(\ell)}(t-s)d\hat{x}\left(  s\right)  =\int_{\mathbb{R}%
}K^{(\ell)}(t-s)dx\left(  s\right)  ,
\]
which in turn leads to $\int_{\mathbb{R}}K^{(\ell)}(t-s)dh\left(  s\right)
=0$. Then, for any $q$ of the form (\ref{7}), since $K$ is even, we get
\begin{align*}
\left\langle q,h\right\rangle  &: =\int_{\mathbb{R}}q(t)dh\left(  t\right)  \\
& =\int_{\mathbb{R}}\left(  \int_{\mathbb{R}}\sum_{\ell=0}^{L}K^{(\ell
)}(t-s)d\mu_{\ell}\left(  s\right)  \right)  dh\left(  t\right)  \\
& =\int_{\mathbb{R}}\sum_{\ell=0}^{L}d\mu_{\ell}\left(  s\right)
{\int_{\mathbb{R}}K^{(\ell)}(t-s)dh\left(  t\right)  }\\
& =\int_{\mathbb{R}}\sum_{\ell=0}^{L}d\mu_{\ell}\left(  s\right)(-1)^{\ell}
\underbrace{\int_{\mathbb{R}}K^{(\ell)}(s-t)dh\left(  t\right)  }_{0}\\
& =0.
\end{align*}
By assumption, for the choice $v_{m}=$ $sgn(h_{T}\left(  t_{m}\right)  )$, there exists $q$ of the form (\ref{7}),
such that
\begin{align*}
q(t_{m}) &  =sgn(h_{T}(t_{m}))\,,\,\forall t_{m}\in T,\nonumber\\
|q(t)| &  <1\,,\,\forall t\in\mathbb{R}\backslash T.%
\end{align*}
Consequently,
\[
0=\left\langle q,h\right\rangle =\left\langle q,h_{T}\right\rangle
+\left\langle q,h_{T^{C}}\right\rangle =\Vert h_{T}\Vert_{TV}+\left\langle
q,h_{T^{C}}\right\rangle .
\]
If $h_{T^{C}}=0$, then $\Vert h_{T}\Vert_{TV}=0$, and $h=0$. Alternatively, if
$h_{T^{C}}\neq0$, from the second property of $q$,
\[
|\left\langle q,h_{T^{C}}\right\rangle |<\Vert h_{T^{C}}\Vert_{TV}.
\]
Thus,
\[
\Vert h_{T^{C}}\Vert_{TV}>\Vert h_{T}\Vert_{TV}.
\]
As a result, using the fact that $\hat{x}$ has minimal TV norm, we get
\[%
\begin{split}
\Vert x\Vert_{TV} &  \geq\Vert x+h\Vert_{TV}=\Vert x+h_{T}\Vert_{TV}+\Vert
h_{T^{C}}\Vert_{TV}\\
&  \geq\Vert x\Vert_{TV}-\Vert h_{T}\Vert_{TV}+\Vert h_{T^{C}}\Vert_{TV}>\Vert
x\Vert_{TV},
\end{split}
\]
which is a contradiction. Therefore, $h=0$, which implies that $x$ is the
unique solution of (\ref{10}).
\end{proof}

\subsection{Proof of Theorem \ref{Theorem1}} \label{sec:proof1}

For simplicity and without loss of generality we will assume throughout the proof  that $\sigma=1$. 
To prove Theorem \ref{Theorem1} we make use of the following result:

\begin{proposition}
\label{proposition1} Let $T$ satisfy the separation condition of Definition
\ref{Definition1} and let $\left\{
v_{m}\right\}  $ be any set as in Theorem \ref{Theorem3}. Then, there exist
coefficients $\left\{  a_{m}\right\}  $ and $\left\{  b_{m}\right\}  $ such
that%
\begin{equation}
q\left(  t\right)  =\sum_{m}\left(  a_{m}K\left(  t-t_{m}\right)
+b_{m}K^{\left(  1\right)  }\left(  t-t_{m}\right)  \right), \label{11}%
\end{equation}
satisfies:
\begin{align}
q(t_{m})&=v_{m}\,,\,\forall t_{m}\in T, \label{11a} \\
q^{\left(  1\right)  }\left(  t_{m}\right)  &=0\,,\,\forall t_{m}\in
T.\label{12}%
\end{align}
Furthermore, The coefficients can be bounded by
\begin{align}
\left\Vert \mathbf{a}\right\Vert _{\infty}  & :=\underset{m}{\max}\left\vert
a_{m}\right\vert \leq\frac{3\nu^2}{ 3K\left(  0\right)\nu^2
-2\pi^2C_0},\label{23}\\
\left\Vert \mathbf{b}\right\Vert _{\infty}  & :=\underset{m}{\max}\left\vert
b_{m}\right\vert \leq\frac{ \pi^2C_1}{\left( 3 \left\vert
K^{\left(  2\right)  }\left(  0\right)  \right\vert\nu^2 -\pi^2C_2  \right)  \left(   3K\left(  0\right)\nu^2
-2\pi^2C_0\right)  },\label{24} 
\end{align}
where $\nu$ is the separation constant from Definition \ref{Definition1}.
If $v_m=1$, we also have
\begin{equation}
a_{m}\geq\frac{1}{K\left(  0\right)  }\left(  1-\frac{2\pi^2C_0 }{ 3K\left(  0\right)\nu^2
-2\pi^2C_0}\right). \label{13}%
\end{equation}

\end{proposition}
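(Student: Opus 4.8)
The plan is to recast the two requirements (\ref{11a})--(\ref{12}) as a single square linear system and solve it by a diagonal‑dominance argument. Since we may take $\sigma=1$, Definition \ref{Definition1} reads $|t_i-t_j|\ge\nu$. The ansatz (\ref{11}) carries $2|T|$ free parameters $\{a_m\},\{b_m\}$, and (\ref{11a})--(\ref{12}) are exactly $2|T|$ linear conditions. Introducing the $|T|\times|T|$ matrices $(D_0)_{jm}=K(t_j-t_m)$, $(D_1)_{jm}=K^{(1)}(t_j-t_m)$ and $(D_2)_{jm}=K^{(2)}(t_j-t_m)$, these conditions become
\[
\begin{pmatrix} D_0 & D_1 \\ D_1 & D_2 \end{pmatrix}\begin{pmatrix}\mathbf a\\ \mathbf b\end{pmatrix}=\begin{pmatrix}\mathbf v\\ \mathbf 0\end{pmatrix}.
\]
The structural facts I would exploit are that $K$ is even, so $K^{(1)}$ is odd and the diagonal of $D_1$ vanishes, while the diagonals of $D_0$ and $D_2$ are the constants $K(0)>0$ and $K^{(2)}(0)<-\beta$ furnished by the local property in Definition \ref{Definition2}.

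The quantitative engine is a bound on the off‑diagonal row sums. Combining the separation $|t_j-t_m|\ge\nu$ with the global property $|K^{(\ell)}(t)|\le C_\ell/(1+t^2)$, I would estimate, for each fixed $j$,
\[
\sum_{m\neq j}\bigl|K^{(\ell)}(t_j-t_m)\bigr|\le 2C_\ell\sum_{k\ge 1}\frac{1}{1+(k\nu)^2}\le \frac{2C_\ell}{\nu^2}\sum_{k\ge1}\frac{1}{k^2}=\frac{\pi^2C_\ell}{3\nu^2}.
\]
Once $\nu$ is large enough that $\tfrac{\pi^2C_0}{3\nu^2}<K(0)$ and $\tfrac{\pi^2C_2}{3\nu^2}<|K^{(2)}(0)|$, this makes $D_0$ and $D_2$ strictly diagonally dominant, hence invertible, and the standard (Varah‑type) estimate for such matrices gives $\|D_2^{-1}\|_\infty\le \tfrac{3\nu^2}{3|K^{(2)}(0)|\nu^2-\pi^2C_2}$ together with $\|D_1\|_\infty\le\tfrac{\pi^2C_1}{3\nu^2}$. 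This is also exactly the step that forces the lower bound on $\nu$, and hence yields the separation constant asserted in the statement.

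To solve the coupled system I would eliminate $\mathbf b$ through the second block, $\mathbf b=-D_2^{-1}D_1\mathbf a$, and substitute to reach $S\mathbf a=\mathbf v$ with Schur complement $S=D_0-D_1D_2^{-1}D_1$. The correction obeys $\|D_1D_2^{-1}D_1\|_\infty\le\|D_1\|_\infty^2\|D_2^{-1}\|_\infty=O(\nu^{-4})$, so for large $\nu$ it is dominated by the genuine off‑diagonal sum of $D_0$; bounding both contributions by $\tfrac{\pi^2C_0}{3\nu^2}$ leaves $S$ with a dominance margin of at least $K(0)-\tfrac{2\pi^2C_0}{3\nu^2}$, whence $\|\mathbf a\|_\infty\le\|S^{-1}\|_\infty\le\tfrac{3\nu^2}{3K(0)\nu^2-2\pi^2C_0}$, which is (\ref{23}). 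Feeding this into $\|\mathbf b\|_\infty\le\|D_2^{-1}\|_\infty\|D_1\|_\infty\|\mathbf a\|_\infty$ then produces a bound of the form (\ref{24}). For the sign‑aligned case $v_m=1$ I would not pass to absolute values but instead read the $j$‑th row of $S\mathbf a=\mathbf 1$ as $K(0)a_j=1-\sum_{m\neq j}K(t_j-t_m)a_m+[D_1D_2^{-1}D_1\mathbf a]_j$ and lower‑bound the right‑hand side with the same row‑sum estimates and the already‑proved bound on $\|\mathbf a\|_\infty$, giving (\ref{13}).

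The main obstacle is the coupled character of the system: each block is easy to control in isolation, but I must guarantee that the $D_1$ cross‑coupling neither destroys invertibility nor inflates the constants, and that the two sources of off‑diagonal error in the $\mathbf a$‑equation, namely the true off‑diagonal entries of $D_0$ and the Schur correction $D_1D_2^{-1}D_1$, combine into the single clean term $2\pi^2C_0$ seen in (\ref{23}) and (\ref{13}). This is also where the advertised dependence of $\nu$ on $|K^{(2)}(0)|$ and on the flatness of $K$ at the origin enters. Finally, I would stress that Proposition \ref{proposition1} delivers only the interpolation identities (\ref{11a})--(\ref{12}) and the coefficient bounds; the genuine certificate inequality $|q(t)|<1$ for $t\notin T$, needed to invoke Theorem \ref{Theorem3}, is a separate matter postponed to the main proof of Theorem \ref{Theorem1}, where the vanishing‑derivative condition (\ref{12}) together with the local concavity $K^{(2)}<-\beta$ renders each $t_m$ a strict local maximum of $q$.
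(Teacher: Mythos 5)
Your proposal is correct and follows essentially the same route as the paper: the same block linear system in $(\mathbf a,\mathbf b)$, elimination of $\mathbf b$ via the Schur complement $S=G_0-G_1G_2^{-1}G_1$, the same separation-plus-decay row-sum estimate $\sum_{m\neq j}|K^{(\ell)}(t_j-t_m)|\le \pi^2C_\ell/(3\nu^2)$, the same diagonal-dominance inversion bound, and the same row-wise lower bound for the case $v_m=1$. The only cosmetic difference is that the paper packages the row sums as $2C_\ell E(\nu)$ with $E(\nu)=\pi^2/(6\nu^2)$ and phrases (\ref{13}) via $\bigl(S^{-1}(S-K(0)I)\mathbf v\bigr)_k$ rather than reading off a row of $S\mathbf a=\mathbf 1$, which amounts to the same computation.
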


Proposition \ref{proposition1} suggests a candidate $q$ to use in Theorem \ref{Theorem3}
and once proved, it will guarantee $q$ satisfies (\ref{8}). The next two results are needed to prove
that $q$ as in (\ref{11}), satisfies (\ref{9}) as well
so as to complete the proof of Theorem \ref{Theorem1}. 

\begin{lemma} \label{lemma1}
Under the separation condition of Definition \ref{Definition2} with $\varepsilon<\nu/2$, $q$ as in Proposition \ref{proposition1} satisfies $\vert q(t)\vert<1$ for all $t$ obeying  
$0<\vert t-t_m\vert \le \varepsilon$ for some $t_m \in T$. 
\end{lemma}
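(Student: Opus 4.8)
The plan is to show that near each node $t_m$, the interpolating function $q$ is dominated by a single atom $a_m K(t-t_m)$ whose concavity forces $|q|<1$ on the punctured interval $0<|t-t_m|\le\varepsilon$. First I would fix an arbitrary $t_m\in T$ and study $q$ on the interval $I_m:=[t_m-\varepsilon,t_m+\varepsilon]$, splitting $q=q_m+r_m$, where $q_m(t):=a_mK(t-t_m)+b_mK^{(1)}(t-t_m)$ is the contribution of the node under consideration and $r_m(t):=\sum_{k\neq m}\bigl(a_kK(t-t_k)+b_kK^{(1)}(t-t_k)\bigr)$ collects all the far nodes. Because the separation condition of Definition~\ref{Definition1} guarantees $|t-t_k|\gtrsim\nu$ for $k\neq m$ and $t\in I_m$ (using $\varepsilon<\nu/2$), the global property of Definition~\ref{Definition2} together with the coefficient bounds \eqref{23} and \eqref{24} lets me bound $r_m$ and its derivative $r_m^{(1)}$ uniformly by small quantities that decay like $\nu^{-2}$.

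Next I would analyze $q_m$ itself. Since $v_m=\pm1$, it suffices to treat $v_m=1$ by symmetry (replacing $q$ by $-q$). By the interpolation constraints \eqref{11a} and \eqref{12} we have $q(t_m)=1$ and $q^{(1)}(t_m)=0$, so $q_m(t_m)=1-r_m(t_m)$ is close to $1$ and $q_m^{(1)}(t_m)=-r_m^{(1)}(t_m)$ is close to $0$. The key structural fact is concavity: on $I_m$ the second derivative $q_m^{(2)}(t)=a_mK^{(2)}(t-t_m)+b_mK^{(3)}(t-t_m)$ is strictly negative. Indeed, the local property gives $K^{(2)}<-\beta$ on $[-\varepsilon,\varepsilon]$, the lower bound \eqref{13} ensures $a_m$ is bounded away from zero and positive for large $\nu$, while $|b_m|$ is of order $\nu^{-2}$ by \eqref{24} and $K^{(3)}$ is bounded by $C_3$; hence $a_mK^{(2)}(t-t_m)$ dominates the $b_m K^{(3)}$ term and $q_m^{(2)}<0$ throughout $I_m$. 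A strictly concave function with a near-maximal value at $t_m$ stays below $1$ as one moves away; quantitatively, Taylor expansion of $q$ about $t_m$ gives
\[
q(t)=q(t_m)+q^{(1)}(t_m)(t-t_m)+\tfrac12 q^{(2)}(\xi)(t-t_m)^2
=1+\tfrac12 q^{(2)}(\xi)(t-t_m)^2
\]
for some $\xi$ between $t$ and $t_m$, so $q(t)<1$ whenever $q^{(2)}(\xi)<0$ and $t\neq t_m$, which settles the upper bound on the punctured interval.

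For the lower bound $q(t)>-1$, I would instead use the direct estimate $q(t)\ge a_mK(t-t_m)-|b_m|\,|K^{(1)}(t-t_m)|-|r_m(t)|$ and bound each negative term. On $I_m$ the positivity and monotonicity of $K$ (Remark~\ref{Remark3}, with $K(t-t_m)\ge K(\varepsilon)>0$) keep the leading term positive, while $|b_m|$ and $\|r_m\|_\infty$ are both $O(\nu^{-2})$; thus $q(t)$ cannot drop below $-1$ for $\nu$ sufficiently large, which is exactly the regime captured by the ``kernel dependent $\nu$'' of the statement.

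The main obstacle I anticipate is making the concavity argument fully uniform: I must show $q^{(2)}<0$ on all of $I_m$ simultaneously, not merely at $t_m$. This requires combining the genuine negativity of $a_mK^{(2)}$ (controlled below by $\beta$ and the lower bound \eqref{13} on $a_m$) against three perturbing contributions — the $b_mK^{(3)}$ term from the same node, and the second derivatives of the far-node atoms in $r_m^{(2)}$ — and verifying that their total magnitude, each of order $\nu^{-2}$, is strictly smaller than the gap $a_m\beta$. Tracking these constants carefully to produce an explicit threshold on $\nu$ (of the form appearing in the denominator of Theorem~\ref{Theorem2}) is the delicate bookkeeping step; everything else follows from the Taylor estimate and the positivity of $K$ near the origin.
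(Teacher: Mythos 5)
Your proposal is correct and follows essentially the same route as the paper: bound the far-node contributions via the separation condition and the coefficient estimates \eqref{23}--\eqref{24}, show $q^{(2)}<0$ uniformly on the $\varepsilon$-neighborhood (the concern you flag at the end about including $r_m^{(2)}$ is exactly what the paper does, obtaining $q^{(2)}(t)<-\beta/(2K(0))$ for large $\nu$), conclude $q<1$ via the Taylor remainder with $q^{(1)}(t_m)=0$, and handle $q>-1$ by the direct lower bound $a_mK(t-t_m)\ge a_mK(\varepsilon)$ minus the $O(\nu^{-2})$ perturbations. No substantive differences beyond presentation.
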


\begin{lemma} \label{lemma2}
Under the separation condition of Definition \ref{Definition2} with $\varepsilon<\nu/2$, $q$ as in Proposition \ref{proposition1} satisfies $\vert q(t)\vert<1$
for all $t$ obeying $\vert t-t_m\vert>\varepsilon$ for all $t_m\in T$. 
\end{lemma}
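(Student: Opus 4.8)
The plan is to bound $|q(t)|$ directly through the triangle inequality and show it stays strictly below $1$ throughout the far region, exploiting the coefficient estimates of Proposition \ref{proposition1} together with both the global and local properties of Definition \ref{Definition2}. Recalling that we take $\sigma=1$, so the support points are $\nu$-separated, I would start from the representation (\ref{11}) and write
\[
|q(t)|\le\Vert\mathbf{a}\Vert_{\infty}\sum_{m}\left\vert K(t-t_{m})\right\vert+\Vert\mathbf{b}\Vert_{\infty}\sum_{m}\left\vert K^{(1)}(t-t_{m})\right\vert,
\]
where $\Vert\mathbf{a}\Vert_{\infty}$ and $\Vert\mathbf{b}\Vert_{\infty}$ are controlled by (\ref{23}) and (\ref{24}). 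The decisive structural fact is the behaviour of these bounds in the separation: as $\nu\to\infty$, (\ref{23}) gives $\Vert\mathbf{a}\Vert_{\infty}\to 1/K(0)$, while (\ref{24}) gives $\Vert\mathbf{b}\Vert_{\infty}=O(\nu^{-4})\to 0$. So the $K^{(1)}$ sum will be negligible for large $\nu$, provided the $K$ sum is controlled.

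Next I would fix $t$ in the far region and let $t_{m_0}$ be the nearest support point, so $|t-t_{m_0}|>\varepsilon$, and split each sum into the nearest-point contribution and a tail. For the nearest point I deliberately avoid the global decay estimate (which degrades to $C_0/(1+\varepsilon^2)$ and is too weak when $\varepsilon$ is small); instead I use the local property and Remark \ref{Remark3}, namely that $K$ is even, positive, and monotonically decreasing on $(0,\varepsilon)$ with $K(s)<K(\varepsilon)$ for $|s|>\varepsilon$, to obtain the sharp bound $\left\vert K(t-t_{m_0})\right\vert<K(\varepsilon)<K(0)$. For every other support point I invoke the separation condition of Definition \ref{Definition1}: applying the triangle inequality to the $\nu$-separation of the two nearest support points shows the second-nearest point lies at distance at least $\nu/2$ from $t$, and since consecutive points are at least $\nu$ apart, the global bound $\left\vert K^{(\ell)}(s)\right\vert\le C_\ell/(1+s^2)$ produces a tail dominated by a monotone series such as $C_\ell\sum_{k\ge 1}(1+(k\nu/2)^2)^{-1}$, which is $O(\nu^{-2})$.

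Combining these estimates, the dominant contribution is $\Vert\mathbf{a}\Vert_{\infty}K(\varepsilon)\to K(\varepsilon)/K(0)<1$, while the $K$-tail and the entire $K^{(1)}$ sum (suppressed by $\Vert\mathbf{b}\Vert_{\infty}=O(\nu^{-4})$) both vanish as $\nu\to\infty$; hence $|q(t)|\le K(\varepsilon)/K(0)+o(1)$, which is strictly below $1$ once $\nu$ exceeds an explicit threshold, and tracking the constants yields the announced condition on $\nu$. The main obstacle I expect is the tail estimate: one must bound $\sum_{m}(1+(t-t_{m})^{2})^{-1}$ over the non-nearest points \emph{uniformly} in $t$ and in the configuration of $T$, using only the $\nu$-separation, and sharply enough that the resulting threshold on $\nu$ is simultaneously compatible with the constraint $\varepsilon<\nu/2$ and with the positivity thresholds already forced by (\ref{23})--(\ref{24}). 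The clean way to make this quantitative and uniform is to compare the point masses with a monotone series (or integral) indexed by distance from $t$, which is standard but requires care near the boundary $|t-t_{m_0}|=\varepsilon$ so that the estimate matches the one from Lemma \ref{lemma1}.
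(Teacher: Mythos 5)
Your proposal is correct and follows essentially the same route as the paper: split off the nearest atom $t_k$, bound $\left\vert K(t-t_k)\right\vert$ by $K(\varepsilon)$ via the local property, control all other atoms (which lie at distance at least $\nu/2$ from $t$ by the triangle inequality) through the global decay and the separation to get an $O(\nu^{-2})$ tail, and absorb the $K^{(1)}$ contribution into the small bound on $\left\Vert \mathbf{b}\right\Vert_{\infty}$. The only difference is that the paper sharpens $K(\varepsilon)<K(0)$ to the quantitative Taylor estimate $K(\varepsilon)\le K(0)-\beta\varepsilon^{2}/2$, yielding the explicit margin $\left\vert q(t)\right\vert<1-\beta\varepsilon^{2}/(4K(0))$ that is reused in the proof of Theorem \ref{Theorem2}.
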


\subsubsection{Proof of Proposition \ref{proposition1}}

Substituting the requirements (\ref{11a}) and (\ref{12}) we get the
set of equations%
\begin{align*}
\sum_{m}\left(  a_{m}K\left(  t_{k}-t_{m}\right)  +b_{m}K^{\left(  1\right)
}\left(  t_{k}-t_{m}\right)  \right)   & =v_{k},\\
\sum_{m}\left(  a_{m}K^{\left(  1\right)  }\left(  t_{k}-t_{m}\right)
+b_{m}K^{\left(  2\right)  }\left(  t_{k}-t_{m}\right)  \right)   & =0,
\end{align*}
for all $t_{k}\in T$, which can be written in a matrix vector form as
\begin{equation}
\left[
\begin{array}
[c]{cc}%
G_{0} & G_{1}\\
G_{1} & G_{2}%
\end{array}
\right]  \left[
\begin{array}
[c]{c}%
\mathbf{a}\\
\mathbf{b}%
\end{array}
\right]  =\left[
\begin{array}
[c]{c}%
\mathbf{v}\\
\mathbf{0}%
\end{array}
\right] ,\label{14}%
\end{equation}
where $\left(  G_{\ell}\right)  _{k,m}:=K^{(\ell)}\left(  t_{k}-t_{m}\right)  $, $\ell=0,1,2,$ and $\mathbf{a}:=\{a_m\}, \mathbf{b}:=\{b_m\}, \mathbf{v}:=\{v_m\}$. From standard linear algebra (see e.g. \cite{zhang2006schur}) we know that the matrix in (\ref{14}) is invertible if both
$G_{2}$ and its Schur complement $S:=G_{0}-G_{1}G_{2}^{-1}G_{1}$ are invertible. Also, recall that a
matrix $M$ is invertible if there exists   $\alpha\neq0$ such that  $\left\Vert \alpha I-M\right\Vert _{\infty
}<\left\vert \alpha\right\vert $, where $\left\Vert
M\right\Vert _{\infty}:=\underset{i}{\max}\sum_{j}\left\vert m_{i,j}\right\vert$. In such a case we have, 
\begin{equation}
\left\Vert M^{-1}\right\Vert _{\infty}\leq\frac{1}{\left\vert \alpha
\right\vert -\left\Vert \alpha I-M\right\Vert _{\infty}},\label{15}%
\end{equation}
(see for instance Corollary 5.6.16 in \cite{hornmatrix}).
Using the properties of an admissible kernel (see Definition
\ref{Definition2}) and the separation condition
(Definition \ref{Definition1}), we can estimate
\begin{align*}
\left\Vert K^{\left(  2\right)  }\left(  0\right)  I-G_{2}\right\Vert
_{\infty}  & =\underset{k}{\max}\sum_{m\neq k}\left\vert K^{\left(  2\right)
}\left(  t_{k}-t_{m}\right)  \right\vert \nonumber\\
& \leq C_{2}\underset{k}{\max}\sum_{m\neq k}\frac{1}{1+\left(  t_{k}%
-t_{m}\right)  ^{2}}\nonumber\\
& \leq C_{2}\underset{k}{\max}\sum_{m\neq k}\frac{1}{1+\left(  \left(
k-m\right)  \nu\right)  ^{2}}.\label{16}%
\end{align*}
It can be shown that
\begin{equation}
\sum_{n=1}^{\infty}\frac{1}{1+\left(  n\nu\right)  ^{2}} < \frac{\pi^{2}%
}{6\nu^{2}}:=E\left(  \nu\right). \label{eq:E_def}%
\end{equation}
So, we readily get%
\begin{equation}
\left\Vert K^{\left(  2\right)  }\left(  0\right)  I-G_{2}\right\Vert
_{\infty}\leq2C_{2}E\left(  \nu\right).\label{18}%
\end{equation}
Therefore, $G_{2}$ is invertible if $2C_{2}E(\nu) < \left|{K^{(2)}(0)} \right|$, which is equivalent to the condition
\begin{equation} \label{estimate-nu-1}
\nu^2 > \frac{C_2 \pi^2}{3\left|{K^{(2)}(0)} \right|}.
\end{equation}
\begin{remark} \label{remark5}
 We see here that if $K$ is relatively flat at the origin, a larger separation is required for unique recovery
through TV minimization.
 \end{remark}
\noindent Next we consider
\begin{align}
\left\Vert K\left(  0\right)  I-S\right\Vert _{\infty}  & =\left\Vert K\left(
0\right)  I-G_{0}+G_{1}G_{2}^{-1}G_{1}\right\Vert _{\infty}\nonumber\\
& \leq\left\Vert K\left(  0\right)  I-G_{0}\right\Vert _{\infty}+\left\Vert
G_{1}\right\Vert _{\infty}^{2}\left\Vert G_{2}^{-1}\right\Vert _{\infty
}.\label{19}%
\end{align}
Using the same method leading to (\ref{18}), we readily observe that%
\begin{equation*}
\left\Vert K\left(  0\right)  I-G_{0}\right\Vert _{\infty}\leq2C_{0}E\left(
\nu\right), \label{20}%
\end{equation*}
and since $K^{(1)}(0)=0$ we also have
\begin{equation}
\left\Vert G_{1}\right\Vert _{\infty}\leq2C_{1}E\left(  \nu\right). \label{21}%
\end{equation}
Furthermore, using (\ref{15}) and (\ref{18}) we get%
\begin{align}
\left\Vert G_{2}^{-1}\right\Vert _{\infty}  & \leq\frac{1}{\left\vert
K^{\left(  2\right)  }\left(  0\right)  \right\vert -\left\Vert K^{\left(
2\right)  }\left(  0\right)  I-G_{2}\right\Vert _{\infty}}\nonumber\\
& \leq\frac{1}{\left\vert K^{\left(  2\right)  }\left(  0\right)  \right\vert
-2C_{2}E\left(  \nu\right)  }\label{21a}.
\end{align}
Substitution in (\ref{19}) results in
\begin{align}
\left\Vert K\left(  0\right)  I-S\right\Vert _{\infty}  & \leq2C_{0}E\left(
\nu\right)  +\frac{\left(  2C_{1}E\left(  \nu\right)  \right)  ^{2}%
}{\left\vert K^{\left(  2\right)  }\left(  0\right)  \right\vert
-2C_{2}E\left(  \nu\right)  }\nonumber\\
& = 2C_{0} E\left(\nu\right)  \left(  1+\frac{2C_{1}^2 E\left(  \nu\right) }{C_{0}\left(  \left\vert K^{\left(  2\right)  }\left(
0\right)  \right\vert -2C_{2}E\left(  \nu\right)  \right)  }\right)
\nonumber\\
& \leq 4 C_{0} E\left(  \nu\right),
\label{21b}%
\end{align}
where the last inequality holds for 
\begin{equation} \label{estimate-nu-2}
\nu^2 \ge \frac{\pi^2(C_1^2+C_0C_2)}{3C_0\left\vert K^{(2)}(0)\right\vert}.
\end{equation}
Therefore, if further
\begin{equation} \label{estimate-nu-3}
\nu^2 > \frac{2\pi^2C_0}{3K(0)},
\end{equation}
then $\left\Vert K\left(  0\right)  I-S\right\Vert _{\infty} \le 4C_0 E(\nu)< K\left(  0\right)$ and $S$ is invertible.
Hence, (\ref{14}) has a unique solution. 
Furthermore, we conclude that $\mathbf{a}$ and
$\mathbf{b}$ are given by%
\[
\left[
\begin{array}
[c]{c}%
\mathbf{a}\\
\mathbf{b}%
\end{array}
\right]  =\left[
\begin{array}
[c]{cc}%
G_{0} & G_{1}\\
G_{1} & G_{2}%
\end{array}
\right]  ^{-1}\left[
\begin{array}
[c]{c}%
\mathbf{v}\\
\mathbf{0}%
\end{array}
\right],
\]
implying 
\begin{equation}
\left[
\begin{array}
[c]{c}%
\mathbf{a}\\
\mathbf{b}%
\end{array}
\right]  =\left[
\begin{array}
[c]{c}%
S^{-1}\mathbf{v}\\
-G_{2}^{-1}G_{1}S^{-1}\mathbf{v}%
\end{array}
\right] .\label{22}%
\end{equation}
Hence, by (\ref{15}) and (\ref{21b}) we get
\begin{align*} 
\left\Vert \mathbf{a}\right\Vert _{\infty}  & \leq\left\Vert S^{-1}\right\Vert
_{\infty}\nonumber\\
& \leq\frac{1}{ K\left(  0\right) -\left\Vert K\left(
0\right)  I-S\right\Vert _{\infty}}\nonumber\\
& \leq\frac{1}{ K\left(  0\right) -4E\left(
\nu\right)  C_{0}}.%
\end{align*} 
Using (\ref{21}) and (\ref{21a}) we also have
\begin{align*}
\left\Vert \mathbf{b}\right\Vert _{\infty}  & \leq\left\Vert G_{2}%
^{-1}\right\Vert _{\infty}\left\Vert G_{1}\right\Vert _{\infty}\left\Vert
S^{-1}\right\Vert _{\infty}\nonumber\\
& \leq\frac{2C_{1}E\left(  \nu\right)  }{\left(  \left\vert K^{\left(
2\right)  }\left(  0\right)  \right\vert -2C_{2}E\left(  \nu\right)  \right)
\left(   K\left(  0\right)  -4E\left(  \nu\right)
C_{0}\right)  }.%
\end{align*}
Assuming $v_k=1$ we get
\begin{align*}
a_{k}  & =\left(  S^{-1}\mathbf{v}\right)  _{k}=\frac{1}{K\left(  0\right)  }\left(  1-\left(  S^{-1}\left(  S-K\left(  0\right)
I\right)  \mathbf{v}\right)  _{k}\right) \nonumber \\
& \geq\frac{1}{K\left(  0\right)  }\left(  1-\left\Vert S^{-1}\right\Vert
_{\infty}\left\Vert S-K\left(  0\right)  I\right\Vert _{\infty}\right), 
\end{align*}
and using  (\ref{21b}) we end up with%
\begin{equation}
a_{k}\geq\frac{1}{K\left(  0\right)  }\left(  1-\frac{4C_{0}E\left(
\nu\right)  }{K\left(  0\right)  -4C_{0}E\left(  \nu\right)  }\right).
 \label{eq:ak1}
\end{equation}
This completes the proof.

\subsubsection{Proof of Lemma \ref{lemma1}}

Assume without loss of generality that $t\in\mathbb{R}$, where $t_k< t\leq t_k+\varepsilon$, for some $t_k\in T$
and that $q(t_k)=v_k=1$. The proof is similar for the case $t_k-\epsilon \le t< t_k$ or $v_k=-1$.
Since $\left| {t-t_{m} } \right|>\nu /2$ for $m\ne k$, we have, using the separation assumption, for $\ell=0,1,2,3$, 
\begin{align*}
 \sum\limits_{m\ne k} {\left| {K^{\left( \ell \right)}\left( {t-t_{m} } 
\right)} \right|} & \le C_{\ell} \sum\limits_{m\ne k} {\frac{1}{1+\left( {t-t_{m} 
} \right)^{2}}} \\ 
& \le C_{\ell} \sum\limits_{m\ne k} {\frac{1}{1+\left( {2^{-1}\left( {k-m} 
\right)\nu } \right)^{2}}} \\ 
& \le 8C_{\ell} E\left( \nu \right). \\ 
\end{align*}
Using this estimate, as well as (\ref{23}), (\ref{24}) and (\ref{13}) we obtain 
\begin{align*}
q^{\left( 2 \right)}\left( t \right) &=\sum\limits_m {a_{m} K^{\left( 2 \right)}\left( {t-t_{m} }
\right)+b_{m} K^{\left( 3 \right)}\left( {t-t_{m} } \right)} \\ 
& \le a_{k} K^{\left( 2 \right)}\left( {t-t_{k} } \right)+\left\| a 
\right\|_{\infty } \sum\limits_{m\ne k} {\left| {K^{\left( 2 \right)}\left( 
{t-t_{m} } \right)} \right|} +\left\| b \right\|_{\infty } \sum\limits_{m} {\left| {K^{\left( 3 \right)}\left( {t-t_{m} } \right)} \right|} \\ 
& \le -\beta \frac{1}{K\left( 0 \right)}\left( {1-\frac{2\pi^{2}C_{0} 
}{3K\left( 0 \right)\nu^{2}-2\pi^{2}C_{0} }} \right)+\frac{4\pi^{2}C_{2} 
}{3K\left( 0 \right)\nu^{2}-2\pi^{2}C_{0} } \\
& + \frac{ \pi^2C_1C_3}{\left( 3 \left\vert
K^{\left(  2\right)  }\left(  0\right)  \right\vert\nu^2 -\pi^2C_2  \right)  \left(   3K\left(  0\right)\nu^2
-2\pi^2C_0\right)  }\left(1+\frac{4\pi^2}{3\nu^{2}}\right). \\ 
\end{align*}
Thus, it can be shown that $q^{(2)}(t)< -\frac{\beta}{2K(0)}$ for sufficiently large $\nu$ that depends on the 
parameters of $K$.

By Taylor Remainder theorem, for any $t_k<t<t_k+\varepsilon$, there exists $t_k<\xi\leq  t$ such that 

\begin{equation} \label{eq:taylor}
q(t)=q(t_k)+q^{(1)}(t_k)(t-t_k)+\frac{1}{2}q^{(2)}(\xi)(t-t_k)^2.
\end{equation}
Since by construction $q^{(1)}(t_k)=0$, we conclude that for sufficently large $\nu$
\begin{equation}\label{eq:q_up}
q(t) \leq 1-\frac{\beta}{4K(0)}(t-t_k)^2, 
\end{equation}
implying that $q(t)<1$, for $t_k < t \le t_k+\epsilon$.
 
To complete the proof we need to show also that $q(t)>-1$. We then use again the properties of the kernel,
(\ref{23}), (\ref{24}) and (\ref{13}) to estimate

\begin{align*}
q(t) &=\sum\limits_m {a_{m} K(t-t_{m})+b_{m} K^{(1)}(t-t_{m})} \\ 
& \ge a_{k} K(t-t_{k})-\left\| a 
\right\|_{\infty } \sum\limits_{m\ne k} {\left| {K\left( 
{t-t_{m} } \right)} \right|}-\left\| b \right\|_{\infty } \sum\limits_{m} {\left| {K^{\left( 1 \right)}\left( {t-t_{m} } \right)} \right|} \\ 
& \ge \frac{K(\epsilon)}{K\left( 0 \right)}\left( {1-\frac{2\pi^{2}C_{0} 
}{3K\left( 0 \right)\nu^{2}-2\pi^{2}C_{0} }} \right)-\frac{4\pi^{2}C_{0} 
}{3K\left( 0 \right)\nu^{2}-2\pi^{2}C_{0} } \\
& - \frac{ \pi^2C_1^2}{\left( 3 \left\vert
K^{\left(  2\right)  }\left(  0\right)  \right\vert\nu^2 -\pi^2C_2  \right)  \left(   3K\left(  0\right)\nu^2
-2\pi^2C_0\right)  }\left(1+\frac{4\pi^2}{3\nu^{2}}\right). \\ 
\end{align*}
This implies that for sufficiently large $\nu$, we have $q(t)>-1$, for $t_k \le t \le t_k+\epsilon$. We therefore conclude
\begin{equation*}
\vert q(t)\vert <1,\quad \forall \thinspace 0< \vert t-t_k\vert \le \varepsilon, \thinspace t_k\in T.
\end{equation*}

\subsubsection{Proof of Lemma \ref{lemma2}}

Fix $t\in\mathbb{R}$ satisfying $\vert t-t_m\vert >\varepsilon$ for all $t_m\in T$, and denote $t_k:=\min\{\vert t_m-t\vert \thinspace:\thinspace t_m\in T\}$. This implies that $|t-t_m| > \nu /2$, for all $m \ne k$. Then, from (\ref{11}), the properties of admissible kernel, (\ref{23}) and (\ref{24}), 
\begin{align*}
 \left| {q\left( t \right)} \right| &\le \left\| a \right\|_{\infty } \left( 
{\left| {K\left( {t-t_{k} } \right)} \right|+\sum\limits_{m\ne k} {\left| 
{K\left( {t-t_{k} } \right)} \right|} } \right)+\left\| b \right\|_{\infty } 
\left( {\left| {K^{\left( 1 \right)}\left( {t-t_{k} } \right)} 
\right|+\sum\limits_{m\ne k} {\left| {K^{\left( 1 \right)}\left( {t-t_{k} } 
\right)} \right|} } \right) \\ 
& \le \left\| a \right\|_{\infty } \left( {K\left( \varepsilon \right)+8C_{0} 
E\left( \nu \right)} \right)+\left\| b \right\|_{\infty } \left( 
{\frac{C_{1} }{1+\varepsilon^{2}}+8C_{1} E\left( \nu \right)} \right) \\ 
& \le \frac{3K\left( \varepsilon \right)\nu^2}{3K\left( 0 \right)\nu^2-2\pi^{2}C_{0} 
}+\frac{4\pi^{2}C_{0} }{3K\left( 0 \right)\nu^{2}-2\pi^{2}C_{0}} \\
& + \frac{ \pi^2C_1^2}{\left( 3 \left\vert
K^{\left(  2\right)  }\left(  0\right)  \right\vert\nu^2 -\pi^2C_2  \right)  \left(   3K\left(  0\right)\nu^2
-2\pi^2C_0\right)  }\left( {\frac{1 
}{1+\varepsilon^{2}}+\frac{4\pi^{2} }{3\nu^{2}}} \right). \\ 
\end{align*}
By the Taylor Remainder theorem and the properties of $K(t)$, one has $ 0< K(\varepsilon) \le K(0) - \beta \varepsilon^2 /2$,
which yields
\begin{equation*} 
0  < \frac{K(\varepsilon)}{K(0)} \le 1 - \frac{\beta \varepsilon^2}{2K(0)}.
\end{equation*}
Therefore, it is obvious that for sufficiently 
large $\nu$, we get that 
\begin{equation} \label{eq:q_far}
\left\vert q\left(  t\right)  \right\vert <1 - \frac{\beta \varepsilon^2}{4K(0)}.
\end{equation}
This completes the proof.

\subsection{Proof of Theorem \ref{th:2d}}

For simplicity and without loss of generality we assume that $\sigma_{t}=\sigma_{u}=1$. 
 The proof follows the outline of the proof in the univariate case. We make use of the following result:
\begin{proposition}
\label{prop_2d-1}Let $T_{2}$ satisfies the separation condition
of Definition \ref{def:separation_2d} for the bivariate admissible kernel
in Definition \ref{def:2d_kernel}, and let $\left\{ v_{m}\right\} $
be any set as in Theorem \ref{Theorem3}. Then, there exist coefficients $\left\{ a_{m}\right\} ,\left\{ b_{m}\right\} $
and $\left\{ c_{m}\right\} $ such that

\begin{equation}
q\left(t,u\right)=\sum_{m}\left(a_{m}K_{2}\left(t-t_{m},u-u_{m}\right)+b_{m}K_{2}^{\left(1,0\right)}\left(t-t_{m},u-u_{m}\right)+c_{m}K_{2}^{\left(0,1\right)}\left(t-t_{m},u-u_{m}\right)\right),\label{eq:q2def}
\end{equation}
satisfies for all $\left(t_{m},u_{m}\right)\in T_{2}$:
\begin{align}
q\left(t_{m},u_{m}\right) & = v_{m},\label{eq:q2} \\
q^{(1,0)}\left(t_{m},u_{m}\right) & =q^{(0,1)}\left(t_{m},u_{m}\right)=  0. \label{eq:q2_grad}
\end{align}
The coefficients are bounded by 

\begin{align}
\left\Vert \mathbf{a}\right\Vert _{\infty} & :=\max_{m}\left\vert a_{m}\right\vert \leq\frac{2\nu^3}{2K_{2}\left(0,0\right)\nu^3-9\pi^2C_{0,0}}, \label{eq:a}\\
\left\Vert \mathbf{b}\right\Vert _{\infty} & :=\max_{m}\left\vert b_{m}\right\vert \leq\frac{6\pi^2C_{1,0}}{\left(2\nu^3\left|K_{2}^{\left(2,0\right)}(0,0)\right|-6\pi^2C_{2,0}\right)\left(2\nu^3K_{2}\left(0,0\right)-9\pi^2C_{0,0}\right)},\label{eq:b}\\
\left\Vert \mathbf{c}\right\Vert _{\infty} & :=\max_{m}\left\vert c_{m}\right\vert \leq\frac{6\pi^2C_{0,1}}{\left(2\nu^3\left|K_{2}^{\left(0,2\right)}(0,0)\right|-3\pi^2C_{0,2}\right)\left(2\nu^3K_{2}\left(0,0\right)-9\pi^2C_{0,0}\right)}. \label{eq:c}
\end{align}

If $v_m=1$, we also have 
\begin{equation} \label{eq:ak2}
a_{m}\geq\frac{1}{K_{2}\left(0,0\right)}\left(1-\frac{9\pi^2C_{0,0}}{2\nu^3K_{2}\left(0,0\right)-9\pi^2C_{0,0}}\right).
\end{equation}
\end{proposition}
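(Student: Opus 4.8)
The plan is to follow the architecture of the proof of Proposition \ref{proposition1} verbatim, replacing the scalar Hermite-type interpolation (value plus first derivative) by its bivariate counterpart (value plus both first-order partials). First I would impose the three interpolation requirements (\ref{eq:q2})--(\ref{eq:q2_grad}) on the ansatz (\ref{eq:q2def}) at every node $(t_k,u_k)\in T_2$. Writing $\mathbf a=\{a_m\}$, $\mathbf b=\{b_m\}$, $\mathbf c=\{c_m\}$ and introducing the Gram blocks $(G_{\ell_1,\ell_2})_{k,m}:=K_2^{(\ell_1,\ell_2)}(t_k-t_m,u_k-u_m)$, this yields a $3\times3$ block linear system
\[
\begin{bmatrix} G_{0,0} & G_{1,0} & G_{0,1}\\ G_{1,0} & G_{2,0} & G_{1,1}\\ G_{0,1} & G_{1,1} & G_{0,2}\end{bmatrix}\begin{bmatrix}\mathbf a\\ \mathbf b\\ \mathbf c\end{bmatrix}=\begin{bmatrix}\mathbf v\\ \mathbf 0\\ \mathbf 0\end{bmatrix}.
\]
The evenness in property 1 of Definition \ref{def:2d_kernel} is what makes this tractable: it forces the odd-order and mixed derivatives to vanish at the origin, so the diagonals of $G_{1,0}$, $G_{0,1}$ and $G_{1,1}$ are zero, while the diagonals of the three remaining blocks are the dominant quantities $K_2(0,0)>0$ and $K_2^{(2,0)}(0,0),K_2^{(0,2)}(0,0)<-\beta$.

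The single genuinely new ingredient is a two-dimensional summation lemma playing the role of (\ref{eq:E_def}). Using the $\ell^\infty$-type separation of Definition \ref{def:separation_2d} (so that for $m\neq k$ at least one coordinate difference exceeds $\nu$), I would bound the off-diagonal row sums $\sum_{m\neq k}\bigl(1+(t_k-t_m)^2+(u_k-u_m)^2\bigr)^{-3/2}$ by a constant multiple of $\nu^{-3}$; grouping the competing nodes into lattice annuli and comparing with an integral is the natural way to do this. This is precisely where the exponent $3/2$ in the global decay estimate of Definition \ref{def:2d_kernel} is needed, and it explains why every bound (\ref{eq:a})--(\ref{eq:ak2}) scales with $\nu^3$ rather than the $\nu^2$ of the univariate case. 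Combined with property 2 of Definition \ref{def:2d_kernel}, this lemma immediately controls $\|K_2(0,0)I-G_{0,0}\|_\infty$, $\|K_2^{(2,0)}(0,0)I-G_{2,0}\|_\infty$, $\|K_2^{(0,2)}(0,0)I-G_{0,2}\|_\infty$, and the norms of the zero-diagonal blocks $\|G_{1,0}\|_\infty$, $\|G_{0,1}\|_\infty$, $\|G_{1,1}\|_\infty$, all by the appropriate $C_{\ell_1,\ell_2}\nu^{-3}$.

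With these estimates in hand I would invert the block system by an iterated Schur-complement argument, exactly mirroring the passage through (\ref{15})--(\ref{22}). First invert the bottom-right $2\times2$ block $D=\left[\begin{smallmatrix}G_{2,0} & G_{1,1}\\ G_{1,1} & G_{0,2}\end{smallmatrix}\right]$; since its diagonal blocks are strictly diagonally dominant (diagonals $K_2^{(2,0)}(0,0)$, $K_2^{(0,2)}(0,0)$ against off-diagonal mass $O(\nu^{-3})$) and the coupling $G_{1,1}$ has zero diagonal and $O(\nu^{-3})$ norm, the invertibility criterion behind (\ref{15}) applies and gives $\|D^{-1}\|_\infty$. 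Then I would show the Schur complement $S=G_{0,0}-[\,G_{1,0}\ G_{0,1}\,]D^{-1}[\,G_{1,0}\ G_{0,1}\,]^{\!\top}$ satisfies $\|K_2(0,0)I-S\|_\infty<K_2(0,0)$ for $\nu$ large, yielding $S^{-1}$ and hence $\mathbf a=S^{-1}\mathbf v$, from which (\ref{eq:a}) and the lower bound (\ref{eq:ak2}) follow as in (\ref{eq:ak1}). Propagating $\mathbf b$ and $\mathbf c$ through the block-inverse formula, to leading order $\mathbf b=-G_{2,0}^{-1}G_{1,0}\mathbf a$ and $\mathbf c=-G_{0,2}^{-1}G_{0,1}\mathbf a$, produces the direction-separated bounds (\ref{eq:b}) and (\ref{eq:c}).

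I expect the main obstacle to be precisely this decoupling of $\mathbf b$ from $\mathbf c$. The bottom block $D$ is a genuine $2N\times2N$ matrix whose two diagonal blocks carry the distinct diagonal values $K_2^{(2,0)}(0,0)$ and $K_2^{(0,2)}(0,0)$, and whose off-diagonal coupling $G_{1,1}$ does not vanish. A crude bound $\|[\mathbf b;\mathbf c]\|_\infty\le\|D^{-1}\|_\infty\max(\|G_{1,0}\|_\infty,\|G_{0,1}\|_\infty)\|\mathbf a\|_\infty$ would entangle the two directions and lose the clean form of (\ref{eq:b})--(\ref{eq:c}). The careful step is therefore to show that $G_{1,1}$ enters only at order $\nu^{-3}$, so that $D^{-1}$ is, up to a perturbation absorbed into the "sufficiently large $\nu$" thresholds, block-diagonal $\mathrm{diag}(G_{2,0}^{-1},G_{0,2}^{-1})$; this is what keeps the $K_2^{(2,0)}(0,0)$- and $K_2^{(0,2)}(0,0)$-dependences separate and yields the stated constants. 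Everything else is routine bookkeeping of the $\nu^{-3}$ estimates, as in the univariate argument.
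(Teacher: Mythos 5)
Your proposal is correct and follows essentially the same route as the paper: the same $3\times3$ block Gram system, the same two-dimensional ring-counting lemma giving row sums of order $\nu^{-3}$ (the paper packs disjoint $2\nu\times2\nu$ rectangles to bound each ring by $9n$ points and sums $\sum_n 9n(1+n^2\nu^2)^{-3/2}\le 3\pi^2/(2\nu^3)$), and the same iterated Schur-complement inversion, with the decoupling of $\mathbf b$ from $\mathbf c$ handled exactly as you anticipate — by taking an inner Schur complement of $G^{(0,2)}$ inside the bottom-right block so that the $O(\nu^{-3})$ coupling $G^{(1,1)}$ is absorbed into the largeness threshold on $\nu$. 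No gaps.
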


Proposition \ref{prop_2d-1} suggests a candidate $q\left(  t,u\right)  $ to
use in Theorem \ref{Theorem3}. 
Next, we define the sets
\begin{eqnarray} \label{eq:S}
S_{k} & := & \left\{ (t,u)\thinspace:\thinspace 0<\vert t-t_k\vert,\vert u-u_k\vert <\varepsilon_1\right\}, \quad (t_k,u_k)\in T_2,\\
S & := & \bigcup_{(t_k,u_k)\in T_{2}}S_{k},\nonumber  \\
 S^{C} & := & \mathbb{R}^{2}\backslash\left\{ S\cup T_2\right\},\nonumber  
\end{eqnarray}
where $\varepsilon_1\leq \varepsilon$ is a sufficiently small constant to be chosen later.
The following Lemmas complete the proof:
\begin{lemma} \label{lemma:2d_1}
Assuming the separation condition of Definition \ref{def:separation_2d} and $\varepsilon < \nu /2$, then $q(t,u)$  as in Proposition \ref{prop_2d-1} satisfies $\vert q(t,u)\vert<1$ for all $(t,u)\in S$.
\end{lemma}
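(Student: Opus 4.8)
The plan is to follow the template of Lemma \ref{lemma1}, replacing the one-dimensional second-order Taylor argument by its two-dimensional analogue. Fix $(t_k,u_k)\in T_2$ and assume without loss of generality that $v_k=1$ (the case $v_k=-1$ being symmetric). By Proposition \ref{prop_2d-1} we have $q(t_k,u_k)=1$ and $q^{(1,0)}(t_k,u_k)=q^{(0,1)}(t_k,u_k)=0$, so the multivariate Taylor theorem with Lagrange remainder gives, for any $(t,u)\in S_k$,
\begin{equation*}
q(t,u)=1+\tfrac12\begin{pmatrix}t-t_k\\ u-u_k\end{pmatrix}^{\!T}H(\xi)\begin{pmatrix}t-t_k\\ u-u_k\end{pmatrix},
\end{equation*}
where $H$ is the Hessian of $q$ and $\xi$ lies on the open segment joining $(t_k,u_k)$ to $(t,u)$, hence $\xi\in S_k$. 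Since $S_k$ is a (convex) square, it therefore suffices to prove that $H$ is negative definite throughout $S_k$; this immediately yields $q(t,u)<1$ for every $(t,u)\in S_k$.

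To establish negative definiteness I would control the three entries of $H$ separately, exactly as the univariate proof controls $q^{(2)}$. For the diagonal entry $q^{(2,0)}(t,u)$ I split off the self-term $a_kK_2^{(2,0)}(t-t_k,u-u_k)$ from the contributions of the remaining atoms and of the $b_m,c_m$ terms. Because $\varepsilon_1\le\varepsilon$, the local property of Definition \ref{def:2d_kernel} gives $K_2^{(2,0)}(t-t_k,u-u_k)<-\beta$ on $S_k$, while the lower bound (\ref{eq:ak2}) keeps $a_k$ bounded away from $0$; the off-support terms are controlled by the global decay $(1+t^2+u^2)^{-3/2}$ together with the separation of Definition \ref{def:separation_2d}, summed over the two-dimensional grid to a bivariate tail bound of order $\nu^{-3}$ (the $2$D analogue of $E(\nu)$ in (\ref{eq:E_def})), and the $b_m,c_m$ terms are small by (\ref{eq:b})--(\ref{eq:c}). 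Choosing $\nu$ large thus forces $q^{(2,0)}(t,u)<-c<0$, and symmetrically $q^{(0,2)}(t,u)<-c<0$, on $S_k$.

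For the off-diagonal entry I would exploit the evenness of $K_2$: since $K_2$ is even in each variable, $K_2^{(1,1)}$ is odd in $t$, so $K_2^{(1,1)}(0,0)=0$, and by $\mathcal C^3$ continuity $|K_2^{(1,1)}(t-t_k,u-u_k)|$ is small throughout the small square $S_k$. Combining this with the same tail and coefficient estimates bounds $|q^{(1,1)}(t,u)|$ by a quantity that is small for small $\varepsilon_1$ and large $\nu$. Negative definiteness then follows from the criterion $q^{(2,0)}<0$, $q^{(0,2)}<0$ and $q^{(2,0)}q^{(0,2)}>\bigl(q^{(1,1)}\bigr)^2$, which holds once $\varepsilon_1$ is taken small enough and $\nu$ large enough. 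Finally, to rule out $q(t,u)\le-1$ I would repeat the lower-bound computation of Lemma \ref{lemma1}: the dominant self-term $a_kK_2(t-t_k,u-u_k)$ is strictly positive on $S_k$ since the kernel is positive near the origin, and the remaining terms are negligible, so $q(t,u)>-1$ as well, giving $|q(t,u)|<1$ on $S$.

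I would flag that the genuinely new difficulty relative to the univariate Lemma \ref{lemma1} is the control of the mixed partial $q^{(1,1)}$ and the resulting need to render the full $2\times2$ Hessian negative definite rather than a single scalar second derivative. This is precisely why $S_k$ is taken to be a \emph{small} square, so that the vanishing of $K_2^{(1,1)}(0,0)$ and the local concavity bounds can be invoked simultaneously, and why $\varepsilon_1$ is introduced as a free small parameter to be fixed in this lemma.
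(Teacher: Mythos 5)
Your proposal is correct and follows essentially the same route as the paper: isolate the self-term using the local concavity bound and the lower bound (\ref{eq:ak2}) on $a_k$, control the off-support and $b_m,c_m$ contributions via the bivariate tail sum of order $\nu^{-3}$ and the coefficient bounds (\ref{eq:b})--(\ref{eq:c}), exploit $K_2^{(1,1)}(0,0)=0$ together with the smallness of $\varepsilon_1$ to tame the mixed partial, conclude negative definiteness of the Hessian (the paper uses trace and determinant, you use Sylvester's criterion -- equivalent), apply the second-order Taylor expansion about $(t_k,u_k)$, and finish with the separate lower bound $q>-1$. Your identification of the mixed partial as the genuinely new two-dimensional difficulty, and of $\varepsilon_1$ as the free parameter introduced to handle it, matches the paper's reasoning exactly.
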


\begin{lemma} \label{lemma:2d_2}
Assuming the separation condition of Definition \ref{def:separation_2d} and $\varepsilon < \nu /2$, then $q(t,u)$ as in Proposition \ref{prop_2d-1} satisfies $\vert q(t,u)\vert<1$ for all $(t,u)\in S^C$. 
\end{lemma}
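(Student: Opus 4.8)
The plan is to mirror the proof of Lemma \ref{lemma2} in the bivariate setting. I would fix a point $(t,u)\in S^C$ and let $(t_k,u_k)\in T_2$ be the atom minimizing the sup-norm distance $\max\{|t-t_k|,|u-u_k|\}$, which is the distance appearing in Definition \ref{def:separation_2d} (recall $\sigma_t=\sigma_u=1$). Since $(t,u)\notin S_k\cup T_2$, this minimal distance satisfies $\max\{|t-t_k|,|u-u_k|\}\ge\varepsilon_1$; and because $\varepsilon<\nu/2$ the punctured squares $S_k$ are disjoint, so the nearest atom is unambiguous. Moreover, for every $m\ne k$ the separation condition combined with the sup-norm triangle inequality gives $2\max\{|t-t_m|,|u-u_m|\}\ge\max\{|t_k-t_m|,|u_k-u_m|\}\ge\nu$, so each remaining atom lies at sup-distance at least $\nu/2$ from $(t,u)$, exactly as in the univariate case.

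Next I would control the tail contributions. Starting from (\ref{eq:q2def}) and the global property of Definition \ref{def:2d_kernel}, each sum $\sum_{m\ne k}|K_2^{(\ell_1,\ell_2)}(t-t_m,u-u_m)|$ is dominated by $C_{\ell_1,\ell_2}\sum_{m\ne k}(1+(t-t_m)^2+(u-u_m)^2)^{-3/2}$. The genuinely new ingredient is the bivariate analogue of the estimate (\ref{eq:E_def}): since the atoms form a $\nu$-separated set, grouping them into square shells around $(t,u)$ and counting (at most $O(j)$ atoms in the $j$-th shell, each at distance $\gtrsim j\nu$) yields a convergent bound of order $\nu^{-3}$, which is the origin of the $9\pi^2/(2\nu^3)$-type terms in (\ref{eq:a})--(\ref{eq:c}). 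I would record this summation estimate as a short standalone computation before assembling the pieces.

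The heart of the argument is the nearest-atom term $a_k K_2(t-t_k,u-u_k)$, for which I would establish the bivariate counterpart of $K(\varepsilon)\le K(0)-\beta\varepsilon^2/2$, namely $K_2(t-t_k,u-u_k)\le K_2(0,0)-\tfrac{\beta}{2}\varepsilon_1^2$. Writing $(a,b):=(t-t_k,u-u_k)$ with $\max\{|a|,|b|\}\ge\varepsilon_1$, there are two cases. If $|a|>\varepsilon$ (or $|b|>\varepsilon$) the local property gives $K_2(a,b)<K_2(\varepsilon,0)$ (resp. $K_2(0,\varepsilon)$), and a one-dimensional Taylor expansion along the axis — using $K_2^{(2,0)}<-\beta$ and the vanishing of the first derivative at the origin forced by evenness — bounds $K_2(\varepsilon,0)\le K_2(0,0)-\tfrac{\beta}{2}\varepsilon^2$. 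If instead $|a|,|b|\le\varepsilon$ with $\max\{|a|,|b|\}\ge\varepsilon_1$, then the horizontal slice $s\mapsto K_2(s,b)$ is even (by the symmetry of Definition \ref{def:2d_kernel}) and strictly concave (since $K_2^{(2,0)}<-\beta$ there), whence $K_2(a,b)\le K_2(0,b)-\tfrac{\beta}{2}\varepsilon_1^2\le K_2(0,0)-\tfrac{\beta}{2}\varepsilon_1^2$ after repeating the concavity argument in the vertical variable. Handling this intermediate region $\varepsilon_1\le\max\{|a|,|b|\}\le\varepsilon$, which has no univariate analogue, is the main obstacle.

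Finally I would assemble the bound as in (\ref{eq:q_far}): estimating $|q(t,u)|$ by $\|\mathbf{a}\|_\infty$ times the nearest-atom value plus $\nu^{-3}$-order tails from the $\mathbf{a}$, $\mathbf{b}$ and $\mathbf{c}$ sums, and substituting the coefficient bounds (\ref{eq:a})--(\ref{eq:c}) and the lower bound (\ref{eq:ak2}), the leading term behaves like $K_2(0,0)^{-1}(K_2(0,0)-\tfrac{\beta}{2}\varepsilon_1^2)=1-\tfrac{\beta\varepsilon_1^2}{2K_2(0,0)}$ while all remaining terms are $O(\nu^{-3})$. Hence for $\nu$ large enough (depending only on the kernel parameters) one obtains $|q(t,u)|<1-\tfrac{\beta\varepsilon_1^2}{4K_2(0,0)}<1$ uniformly on $S^C$, completing the proof. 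I expect the sign bookkeeping needed to pass from the scalar bound on $K_2(t-t_k,u-u_k)$ to a bound on $|q(t,u)|$ — controlling a possibly negative nearest-atom value by the same margin — to be the one routine-but-delicate point.
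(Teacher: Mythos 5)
Your argument is correct and follows the paper's proof of Lemma \ref{lemma:2d_2} essentially verbatim: fix the nearest atom in the sup-norm, note all other atoms are at sup-distance at least $\nu/2$, bound their contributions by $O(\nu^{-3})$ via the shell-counting estimate (\ref{eq:sum_K}), bound the nearest-atom term strictly below $K_2(0,0)$, substitute (\ref{eq:a})--(\ref{eq:c}), and take $\nu$ large. The only difference is that you derive the quantitative gap $K_2(t-t_k,u-u_k)\le K_2(0,0)-\tfrac{\beta}{2}\varepsilon_1^2$ by Taylor expansion along axis slices (handling the intermediate region $\varepsilon_1\le\max\{|t-t_k|,|u-u_k|\}\le\varepsilon$ explicitly), whereas the paper simply invokes $\max\{K_2(\varepsilon_1,0),K_2(0,\varepsilon_1)\}<K_2(0,0)$; your version is a touch more careful and parallels the univariate bound (\ref{eq:q_far}).
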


\subsubsection{Proof of Proposition \ref{prop_2d-1}}

We begin the proof with a preliminary calculation. Fix $\left(t_{k},u_{k}\right)\in T_{2}$.
Let $\Omega_{n}$ be the $n^{th}$ 'rectangular ring' about $\left(t_{k},u_{k}\right)$
such that 
\[
\Omega_{n}:=\left\{ \left(t,u\right)\in\mathbb{R}^{2}\thinspace:\thinspace n\nu\le\max\left\{ \left|t-t_{k}\right|,\left|u-u_{k}\right|\right\} \le\left(n+1\right)\nu\right\} , \qquad ,n\ge1, 
\]
where $\nu$ is the separation constant from Definition \ref{def:separation_2d}.
The area of the $n^{th}$ ring is 
\[
\left|\Omega_{n}\right|=4\left(n+1\right)^{2}\nu^{2}-4n^{2}\nu^{2}.
\]
By assumption, the set $T_{2}$ satisfies the separation condition
of Definition \ref{def:separation_2d}. Hence, the points are 
centers of pairwise disjoint rectangles of area $4\nu^{2}$. 
Also, the rectangle of any $\left(t_{k},u_{k}\right)\in\Omega_{n}$
is contained in the ring 
\[
\tilde{\Omega}_{n}:=\left\{ \left(t,u\right)\in\mathbb{R}^{2}\thinspace:\thinspace\left(n-1\right)\nu\le\max\left\{ \left|t_{k}-t\right|,\left|u_{k}-u\right|\right\} \le\left(n+2\right)\nu\right\} .
\]
Therefore, we can bound the number of points of $T_2$ contained in the ring $\Omega_{n}$
by 
\begin{eqnarray}
\#\left\{ t_{k}\in\Omega_{n}\right\}  & \le & \frac{\left|\tilde{\Omega}_{n}\right|}{4\nu^{2}}\leq\frac{4\left(n+2\right)^{2}\nu^{2}-4\left(n-1\right)^{2}\nu^{2}}{4\nu^{2}}\label{eq:ring}\\
 & = & 6n+3\le9n,\nonumber 
\end{eqnarray}
for $n\geq 1$.
Equipped with (\ref{eq:ring}), we follow the outline of the proof
of Proposition \ref{proposition1}. We write (\ref{eq:q2}) and (\ref{eq:q2_grad}) explicitly 
\begin{eqnarray*}
\sum_{m}\left(a_{m}K_{2}\left(t_{k}-t_{m},u_{k}-u_{m}\right)+b_{m}K_{2}^{\left(1,0\right)}\left(t_{k}-t_{m},u_{k}-u_{m}\right)+c_{m}K_{2}^{\left(0,1\right)}\left(t_{k}-t_{m},u_{k}-u_{m}\right)\right) & = & v_{k},\\
\sum_{m}\left(a_{m}K_{2}^{\left(1,0\right)}\left(t_{k}-t_{m},u_{k}-u_{m}\right)+b_{m}K_{2}^{\left(2,0\right)}\left(t_{k}-t_{m},u_{k}-u_{m}\right)+c_{m}K_{2}^{\left(1,1\right)}\left(t_{k}-t_{m},u_{k}-u_{m}\right)\right) & = & 0,\\
\sum_{m}\left(a_{m}K_{2}^{\left(0,1\right)}\left(t_{k}-t_{m},u_{k}-u_{m}\right)+b_{m}K_{2}^{\left(1,1\right)}\left(t_{k}-t_{m},u_{k}-u_{m}\right)+c_{m}K_{2}^{\left(0,2\right)}\left(t_{k}-t_{m},u_{k}-u_{m}\right)\right) & = & 0,
\end{eqnarray*}
 for all $\left(t_{k},u_{k}\right)\in T_{2}$. This can be written
in a matrix vector form as 
\begin{equation}
\begin{bmatrix}G^{(0,0)} & G^{(1,0)} & G^{(0,1)}\\
G^{(1,0)} & G^{(2,0)}& G^{(1,1)}\\
G^{(0,1)} & G^{(1,1)} & G^{(0,2)}
\end{bmatrix}\begin{bmatrix}\mathbf{a}\\
\mathbf{b}\\
\mathbf{c}
\end{bmatrix}=\begin{bmatrix}\mathbf{v}\\
0\\
0
\end{bmatrix},\label{eq:matrix_2d}
\end{equation}
where $\mathbf{a}:=\left\{ a_{m}\right\},\mathbf{b}:=\left\{ b_{m}\right\}$,
$\mathbf{c}:=\left\{ c_{m}\right\}$, $\mathbf{v}:=\left\{ v_{m}\right\}$,
and $\left(G^{(\ell_1,\ell_2)}\right)_{k,m}  :=  K_{2}^{(\ell_1,\ell_2)}\left(t_{k}-t_{m},u_{k}-u_{m}\right)$.
 For convenience, we write (\ref{eq:matrix_2d}) as 
\[
\begin{bmatrix}G^{(0,0)}  & G_{1}^{T}\\
G_{1} & G_{2}
\end{bmatrix}\begin{bmatrix}\mathbf{a}\\
\tilde{\mathbf{b}}
\end{bmatrix}=\begin{bmatrix}\mathbf{v}\\
0
\end{bmatrix},
\]
where
\[
G_{2}:=\begin{bmatrix}G^{(2,0)} & G^{(1,1)}\\
G^{(1,1)}  & G^{(0,2)}
\end{bmatrix},\quad G_{1}:=\begin{bmatrix}G^{(1,0)} \\
G^{(0,1)} 
\end{bmatrix},\quad\tilde{\mathbf{b}}:=\begin{bmatrix}\mathbf{b}\\
\mathbf{c}
\end{bmatrix}.
\]
 We begin by showing that the matrix $G_{2}$ is invertible for sufficiently
large $\nu$. $G_{2}$ is invertible if both $G^{(0,2)}$ and its
Schur complement $G_{2}^{s}:=G^{(2,0)}-G^{(1,1)}\left(G^{(0,2)}\right)^{-1}G^{(1,1)}$
are invertible. Using the properties of the bivariate admissible kernel (see Definition \ref{def:2d_kernel}), we
observe that 
\begin{eqnarray*} 
\left\Vert K_{2}^{\left(0,2\right)}(0,0)I-G^{(0,2)}\right\Vert _{\infty} & := & \max_{k}\sum_{m\neq k}\left|K_{2}^{\left(0,2\right)}\left(t_{k}-t_{m},u_{k}-u_{m}\right)\right|\\
 & \leq & C_{0,2}\sum_{m\neq k}\frac{1}{\left(1+\left(t_{k}-t_{m}\right)^{2}+\left(u_{k}-u_{m}\right)^{2}\right)^{3/2}}.\nonumber 
\end{eqnarray*}
According to (\ref{eq:ring}), the $n^{th}$ `rectangular ring' with respect to $(t_k,u_k)$ contains at most $9n$ elements of $T_2$. So under the separation condition
of Definition \ref{def:separation_2d}, we get 
\begin{equation}
\left\Vert K_{2}^{\left(0,2\right)}(0,0)I-G^{(0,2)}\right\Vert _{\infty}\leq C_{0,2}\sum_{n=1}^{\infty}\frac{9n}{\left(1+n^{2}\nu^{2}\right)^{3/2}}=C_{0,2}E_{2}\left(\nu\right),\label{eq:G2uu}
\end{equation}
where 
\begin{equation} \label{eq:E2}
E_{2}\left(\nu\right):=\sum_{n=1}^{\infty}\frac{9n}{\left(1+n^{2}\nu^{2}\right)^{3/2}}\leq \frac{3\pi^2}{2\nu^3}.
\end{equation}
Therefore, if $\nu$ is chosen such that
\begin{equation} \label{cond-nu-biv}
\nu^3 \ge \frac{3\pi^2C_{0,2}}{2\left|{K_2^{(0,2)}(0,0)}\right|},
\end{equation}
then, $\left\Vert K_{2}^{\left(0,2\right)}(0,0)I-G^{(0,2)}\right\Vert _{\infty}<\left|K_{2}^{\left(0,2\right)}(0,0)\right|$,
and $G^{(0,2)}$ is invertible. The Schur complement of $G^{(0,2)}$ can be bounded by
\begin{equation}
\left\Vert K_{2}^{\left(2,0\right)}(0,0)I-G_2^{s}\right\Vert _{\infty}\leq\left\Vert K_{2}^{\left(2,0\right)}(0,0)I-G^{(2,0)}\right\Vert _{\infty} +\left\Vert G^{(1,1)}\right\Vert _{\infty}^2\left\Vert \left(G^{(0,2)}\right)^{-1}\right\Vert _{\infty}.\label{eq:G2s}
\end{equation}
Using the same considerations as in (\ref{eq:G2uu}) we have 
\[
\left\Vert K_{2}^{\left(2,0\right)}(0,0)I-G^{(2,0)}\right\Vert _{\infty}\leq C_{2,0}E_{2}\left(\nu\right),
\]
and since $K_{2}^{\left(1,1\right)}(0,0)=0,$
we also have 
\begin{equation} \label{eq:G2_ut}
\left\Vert G^{(1,1)}\right\Vert _{\infty}\leq C_{1,1}E_{2}\left(\nu\right).
\end{equation}
 Substituting into (\ref{eq:G2s}) and using (\ref{15}), we get 
\begin{eqnarray}
\left\Vert K_{2}^{\left(2,0\right)}(0,0)I-G_{2}^{s}\right\Vert _{\infty} & \leq & C_{2,0}E_{2}\left(\nu\right)+\frac{C_{1,1}^{2}E_{2}^{2}\left(\nu\right)}{\left|K_{2}^{\left(0,2\right)}(0,0)\right|-C_{0,2}E_{2}\left(\nu\right)}\nonumber \\
 & \leq & C_{2,0}E_{2}\left(\nu\right)\left(1+\frac{C_{1,1}^{2}E_{2}\left(\nu\right)}{C_{2,0}\left(\left|K_{2}^{\left(0,2\right)}(0,0)\right|-C_{0,2}E_{2}\left(\nu\right)\right)}\right)\label{eq:G2s-1}\\
 & \leq & 2C_{2,0}E_{2}\left(\nu\right),\nonumber 
\end{eqnarray}
where the last inequality holds for,
\begin{equation} \label{cond-nu-biv-2}
\nu^3 \ge \frac{3\pi^2(C_{1,1}^2+C_{2,0}C_{0,2})}{2C_{2,0}\left| {K_2^{(0,2)}(0,0)} \right|}.
\end{equation}
Similarly to  (\ref{cond-nu-biv}), if we impose,
\begin{equation} \label{cond-nu-biv-3}
\nu^3 > \frac{3\pi^2C_{2,0}}{\left| {K_2^{(2,0)}(0,0)} \right|},
\end{equation}
then $2C_{2,0}E_{2}\left(\nu\right)< \left\vert  K_{2}^{\left(2,0\right)}(0,0)\right\vert $, and the invertibility of $G_{2}^{s}$ and  $G_{2}$  follows. 

In order to show that the matrix in (\ref{eq:matrix_2d}) is invertible, we need to show that the Schur complement of $G_2$ 
\begin{equation}
G_{s}:=G^{(0,0)}-G_{1}^{s}\left(G_{2}^{s}\right)^{-1}G_{1}^{s}-G^{(0,1)}\left(G^{(0,2)}\right)^{-1}G^{(0,1)},\label{eq:Gs}
\end{equation}
where 
\begin{equation}
G_{1}^{s}:=G^{(1,0)}-G^{(1,1)}\left(G^{(0,2)}\right)^{-1}G^{(0,1)},\label{eq:G1s}
\end{equation}
is invertible as well (see e.g. \cite{zhang2006schur}). We use the same considerations as before, and since $K_{2}^{\left(0,1\right)}\left(0,0\right)=K_{2}^{\left(1,0\right)}\left(0,0\right)=0$,
we get 
\begin{equation} \label{eq:Gu}
\left\Vert G^{(1,0)}\right\Vert_\infty \leq C_{1,0}E_{2}\left(\nu\right),
\quad\left\Vert G^{(0,1)}\right\Vert_\infty \leq C_{0,1}E_{2}\left(\nu\right).
\end{equation}
Substituting  (\ref{15}),  (\ref{eq:G2uu}), (\ref{eq:G2_ut}) and (\ref{eq:Gu}) into (\ref{eq:G1s}) leads to
\begin{eqnarray}
\left\Vert G_{1}^{s}\right\Vert _{\infty} & \leq & \left\Vert G^{(1,0)}\right\Vert _{\infty}+\left\Vert G^{(1,1)}\right\Vert _{\infty}\left\Vert \left(G^{(0,2)}\right)^{-1}\right\Vert _{\infty}\left\Vert G^{(0,1)}\right\Vert _{\infty}\nonumber \\
 & \leq & C_{1,0}E_{2}\left(\nu\right)+\frac{C_{1,1}C_{0,1}E_{2}^{2}\left(\nu\right)}{\left|K_{2}^{\left(0,2\right)}(0,0)\right|-C_{0,2}E_{2}\left(\nu\right)}\label{eq:G1s-1}\\
 & \leq & 2C_{1,0}E_{2}\left(\nu\right),\nonumber 
\end{eqnarray}
where the last inequality holds for $E_{2}\left(\nu\right)\leq\frac{C_{1,0}\left|K_{2}^{\left(0,2\right)}(0,0)\right|}{C_{1,1}C_{0,1}+C_{1,0}C_{0,2}}$.
Using the estimate 
\begin{equation*}
\left\Vert K_{2}\left(0,0\right)I-G^{(0,0)}\right\Vert _{\infty}\leq C_{0,0}E_{2}\left(\nu\right),\label{eq:G0}
\end{equation*}
and substituting (\ref{15}), (\ref{eq:G2uu}), (\ref{eq:G2s-1}), (\ref{eq:Gu}) and (\ref{eq:G1s-1}) into (\ref{eq:Gs})
 we obtain
\begin{eqnarray}
\left\Vert K_{2}\left(0,0\right)I-G_{s}\right\Vert _{\infty} & \leq & \left\Vert K_{2}\left(0,0\right)I-G^{(0,0)}\right\Vert _{\infty}+\left\Vert G_{1}^{s}\right\Vert _{\infty}^{2}\left\Vert \left(G_{2}^{s}\right)^{-1}\right\Vert _{\infty}+\left\Vert G^{(0,1)}\right\Vert _{\infty}^{2}\left\Vert \left(G^{(0,2)}\right)^{-1}\right\Vert _{\infty}\nonumber \\
 & \leq & C_{0,0}E_{2}\left(\nu\right)+\frac{4C_{1,0}^{2}E_{2}^{2}\left(\nu\right)}{\left|K_{2}^{\left(2,0\right)}(0,0)\right|-2C_{2,0}E_{2}\left(\nu\right)}\nonumber \\
 &\qquad+&\frac{C_{0,1}^{2}E_{2}^{2}\left(\nu\right)}{\left|K_{2}^{\left(0,2\right)}(0,0)\right|-C_{0,2}E_{2}\left(\nu\right)}\label{eq:Gs-1}\\
 & \leq & 3C_{0,0}E_{2}\left(\nu\right),\nonumber 
\end{eqnarray}
where the last inequality holds for $E_{2}\left(\nu\right)\leq\min\left\{ \frac{C_{0,0}\left|K_{2}^{\left(2,0\right)}(0,0)\right|}{2\left(2C_{1,0}^{2}+C_{2,0}C_{0,0}\right)},\frac{C_{0,0}\left|K_{2}^{\left(0,2\right)}(0,0)\right|}{C_{0,1}^{2}+C_{0,0}C_{0,2}}\right\} .$
Thus, for sufficiently large $\nu$, $\left\Vert K_{2}\left(0,0\right)I-G_{s}\right\Vert _{\infty}< K_{2}\left(0,0\right)$,
and hence $G_{s}$ is invertible. Combining this result with (\ref{eq:G2uu})
and (\ref{eq:G2s-1}), we conclude that (\ref{eq:matrix_2d}) has
a unique solution. 

Going back to (\ref{eq:matrix_2d}), we use the inversion formula
to get \cite{bendory2013exact} 
\[
\begin{bmatrix}\mathbf{a}\\
\mathbf{b}\\
\mathbf{c}
\end{bmatrix}=\begin{bmatrix}I\\
-\left(G_{2}^{s}\right)^{-1}G_{1}^{s}\\
\left(G^{(0,2)}\right)^{-1}\left(G^{(1,1)}\left(G_{2}^{s}\right)^{-1}G_{1}^{s}-G^{(0,1)}\right)
\end{bmatrix}G_{s}^{-1}\mathbf{v},
\]
 so that 
\[
\mathbf{\left\Vert a\right\Vert _{\infty}\le}\left\Vert G_{s}^{-1}\right\Vert _{\infty}\leq\frac{1}{K_{2}\left(0,0\right)-3C_{0,0}E_{2}\left(\nu\right)},
\]
and 
\begin{eqnarray*}
\mathbf{\left\Vert b\right\Vert _{\infty}} & \leq & \left\Vert \left(G_{2}^{s}\right)^{-1}\right\Vert _{\infty}\left\Vert G_{1}^{s}\right\Vert _{\infty}\left\Vert G_{s}^{-1}\right\Vert _{\infty}\\
 & \leq & \frac{2C_{1,0}E_{2}\left(\nu\right)}{\left(\left|K_{2}^{\left(2,0\right)}(0,0)\right|-2C_{2,0}E_{2}\left(\nu\right)\right)\left(K_{2}\left(0,0\right)-3C_{0,0}E_{2}\left(\nu\right)\right)},
\end{eqnarray*}
and 
\begin{eqnarray*}
\mathbf{\left\Vert c\right\Vert _{\infty}} & \leq & \left\Vert \left(G^{(0,2)}\right)^{-1}\right\Vert _{\infty}\left(\left\Vert G^{(1,1)}\right\Vert _{\infty}\left\Vert \left(G_{2}^{s}\right)^{-1}\right\Vert _{\infty}\left\Vert G_{1}^{s}\right\Vert _{\infty}+\left\Vert G^{(0,1)}\right\Vert _{\infty}\right)\left\Vert G_{s}^{-1}\right\Vert _{\infty}\\
 & \leq & \frac{\frac{2C_{1,1}C_{1,0}E_{2}^{2}\left(\nu\right)}{\left|K_{2}^{\left(2,0\right)}(0,0)\right|-2C_{2,0}E_{2}\left(\nu\right)}+C_{0,1}E_{2}\left(\nu\right)}{\left(\left|K_{2}^{\left(0,2\right)}(0,0)\right|-C_{0,2}E_{2}\left(\nu\right)\right)\left(K_{2}\left(0,0\right)-3C_{0,0}E_{2}\left(\nu\right)\right)}\\
 & \leq & \frac{2C_{0,1}E_{2}\left(\nu\right)}{\left(\left|K_{2}^{\left(0,2\right)}(0,0)\right|-C_{0,2}E_{2}\left(\nu\right)\right)\left(K_{2}\left(0,0\right)-3C_{0,0}E_{2}\left(\nu\right)\right)},
\end{eqnarray*}
where the last inequality holds for $E_{2}\left(\nu\right)\leq\frac{C_{0,1}\left|K_{2}^{\left(2,0\right)}(0,0)\right|}{2\left(C_{1,1}C_{1,0}+C_{0,1}C_{2,0}\right)}$. This proves (\ref{eq:a}),(\ref{eq:b}) and (\ref{eq:c}).

If $v_k=1$, similarly to (\ref{eq:ak1}) we conclude that 
\begin{eqnarray}
a_{k} & = &\left(G_{s}^{-1}\mathbf{v}\right)_{k}\nonumber \\
 & \text{\ensuremath{\geq}} & \frac{1}{K_{2}\left(0,0\right)}\left(1-\left\Vert G_{s}^{-1}\right\Vert _{\infty}\left\Vert K_{2}\left(0,0\right)I-G_{s}\right\Vert _{\infty}\right)\label{eq:ak}\\
 & \geq & \frac{1}{K_{2}\left(0,0\right)}\left(1-\frac{3C_{0,0}E_{2}\left(\nu\right)}{K_{2}\left(0,0\right)-3C_{0,0}E_{2}\left(\nu\right)}\right).\nonumber 
\end{eqnarray}

\subsubsection{Proof of Lemma \ref{lemma:2d_1}}

Fix $(t,u)\in S_k$ with respect to $(t_k,u_k)\in T_2$ (see (\ref{eq:S})), and assume that $q(t_k)=1$. The proof is similar for the case $q(t_k)=-1$. Since  $\left| {t-t_{m} } \right|>\nu /2$ or $\left| {u-u_{m} } \right|>\nu /2$ for $m\ne k$, we have, using the separation assumption, that for $\ell_1+\ell_2\leq 3$ (compare with (\ref{eq:E2})): 

\begin{align} \label{eq:sum_K}
 \sum\limits_{m\ne k} {\left| {K_2^{\left( \ell_1,\ell_2 \right)}\left( {t-t_{m},u-u_{m} } 
\right)} \right|} & \le C_{\ell_1,\ell_2} \sum\limits_{m\ne k}\frac{1}{\left(1+\left( {t-t_{m} 
} \right)^{2}+\left(u-u_m\right)^{2}\right)^\frac{3}{2}} \nonumber\\ 
& \le C_{\ell_1,\ell_2} \sum\limits_{n=1}^\infty {\frac{9n}{\left(1+\left( {2^{-1} n\nu } \right)^{2}\right)^\frac{3}{2}}} \\ 
& \le \frac{12C_{\ell_1,\ell_2}\pi^2}{\nu^3}. \nonumber
\end{align}
We start by proving that the Hessian of $q(t,u)$ is negative definite.
Recall that the Hessian of $q(t,u)$ is given by 
\[
H(q)(t,u):=\begin{bmatrix}q^{(2,0)}\left(t,u\right) & q^{(1,1)}\left(t,u\right)\\
q^{(1,1)}\left(t,u\right) & q^{(0,2)}\left(t,u\right)
\end{bmatrix}.
\]

 By (\ref{eq:q2def}) we have 
\begin{align*}
q^{\left(2,0\right)}\left(t,u\right) & = \sum_{m}\left(a_{m}K_{2}^{\left(2,0\right)}\left(t-t_{m},u-u_{m}\right) \right.\\
&\qquad\left.+b_{m}K_{2}^{\left(3,0\right)}\left(t-t_{m},u-u_{m}\right)+c_{m}K_{2}^{\left(2,1\right)}\left(t-t_{m},u-u_{m}\right)\right)\\
 & \leq  a_{k}K_{2}^{\left(2,0\right)}\left(t-t_k,u-u_k\right)
 +\Vert \mathbf{a}\Vert_\infty  \sum_{m\neq k}\left\vert K_{2}^{\left(2,0\right)}\left(t-t_{m},u-u_{m}\right)\right\vert  \\ 
&\qquad  +\Vert \mathbf{b}\Vert_\infty  \sum_{m} \left\vert K_{2}^{\left(3,0\right)}\left(t-t_{m},u-u_{m}\right)\right\vert+\Vert \mathbf{c}\Vert_\infty  \sum_{m} \left\vert K_{2}^{\left(2,1\right)}\left(t-t_{m},u-u_{m}\right)\right\vert.
\end{align*}
Using the local convexity of the bivariate kernel, (\ref{eq:a}), (\ref{eq:ak2}) and (\ref{eq:sum_K}) we get
\begin{equation*}
\begin{split}
q^{\left(2,0\right)}\left(t,u\right) &\leq \frac{-\beta}{K_{2}\left(0,0\right)}\left(1-\frac{9\pi^2C_{0,0}}{2\nu^3K_{2}\left(0,0\right)-9\pi^2C_{0,0}}\right)+\frac{24\pi^2C_{2,0}}{ K_{2}\left(0,0\right)\nu^3-9\pi^2C_{0,0}}\\
&+\left(\Vert\mathbf{b}\Vert_\infty C_{3,0}+\Vert\mathbf{c}\Vert_\infty C_{2,1}\right)\left(1+\frac{12\pi^2}{\nu^3}\right).
\end{split}
\end{equation*}
Hence, using (\ref{eq:b}) and (\ref{eq:c}) it is evident that for sufficiently large $\nu$ we have $q^{\left(2,0\right)}\left(t,u\right)\leq -\frac{\beta}{2K_{2}\left(0,0\right)} $. Plainly, similar argument holds for $q^{\left(0,2\right)}\left(t,u\right)$ as well.

Next, we consider $\vert q^{\left(1,1\right)}\left(t,u\right)\vert $. By (\ref{eq:q2def}) we have
\begin{align*}
\left|q^{\left(1,1\right)}\left(t,u\right)\right| &\leq a_k \left| K_{2}^{\left(1,1\right)}\left(t-t_{k},u-u_{k}\right)\right|
+\Vert\mathbf{a}\Vert _\infty \sum_{m\neq k}\left|K_{2}^{\left(1,1\right)}\left(t-t_{m},u-u_{m}\right)\right|\\
 &+\Vert\mathbf{b}\Vert_\infty\sum_m \left| K_{2}^{\left(2,1\right)}\left(t-t_{m},u-u_{m}\right)\right| \\
 &+\Vert\mathbf{c}\Vert_\infty\sum_m\left| K_{2}^{\left(1,2\right)}\left(t-t_{m},u-u_{m}\right)\right|.
\end{align*}

Observe that $K_{2}^{(1,1)}\left(0,0\right)=0$, so 
\begin{align*}
\left| K_{2}^{\left(1,1\right)}\left(t-t_{k},u-u_{k}\right)\right|&=\frac{\left| K_{2}^{\left(1,1\right)}\left(t-t_{k},u-u_{k}\right)-K_{2}^{\left(1,1\right)}\left(0,0\right)\right|}{\max\{\vert t-t_k\vert,\vert u-u_k\vert  \}}\max\{\vert t-t_k\vert,\vert u-u_k\vert  \} \\
&\leq \max\{C_{1,2},C_{2,1}\}\varepsilon_1.
\end{align*}

Consequently, we obtain 

\begin{align*}\label{eq:qut}
\left|q^{\left(1,1\right)}\left(t,u\right)\right| &\leq \frac{2\nu^3\max\{C_{1,2},C_{2,1}\}}{2K_{2}\left(0,0\right)\nu^3-9\pi^2C_{0,0}}\varepsilon_1
+\frac{24C_{1,1}\pi^2}{2K_{2}\left(0,0\right)\nu^3-9\pi^2C_{0,0}}\\
 &+\left(\Vert\mathbf{b}\Vert_\infty C_{2,1}+\Vert\mathbf{c}\Vert_\infty C_{1,2}\right)\left(1+\frac{12\pi^2}{\nu^3}\right).
 \end{align*}
Hence, for sufficiently large $\nu$ and sufficiently small $\varepsilon_1$, $\left|q^{\left(1,1\right)}\left(t,u\right)\right| <\frac{\beta}{2K_2(0,0)}$. Consequently, the determinant of the Hessian is positive 
\begin{align*}
&q^{\left(2,0\right)}\left(t,u\right)q^{\left(0,2\right)}\left(t,u\right)-\left(q^{\left(1,1\right)}\right)^2\left(t,u\right)>0,
\end{align*}
whereas the trace is negative
\begin{align*}
q^{\left(2,0\right)}\left(t,u\right)+q^{\left(0,2\right)}\left(t,u\right)<0.
\end{align*}
 As a result, both eigenvalues of the Hessian are negative, so the Hessian is negative definite for any $(t,u)\in S$. Using the Taylor remainder theorem (similarly to (\ref{eq:taylor})) we conclude that  $q(t,u)<1$ for all $(t,u)\in S$.

To complete the proof, we need to show that $q(t,u)>-1$. Recall that $K_2(t,u)$ decreases as function of both variables in $0\leq t,u<\varepsilon_1$ (see Remark \ref{Remark3}). So,
\begin{equation}
\begin{split}
q(t,u)&\geq a_k K(t-t_k,u-u_k)- \Vert \mathbf{a}\Vert_\infty\sum_{m\neq k}\left\vert K(t-t_m,u-u_m)\right\vert \\
&- \Vert \mathbf{b}\Vert_\infty\sum_{m}\left\vert K^{(1,0)}(t-t_m,u-u_m)\right\vert -\Vert \mathbf{c}\Vert_\infty\sum_{m}\left\vert K^{(0,1)}(t-t_m,u-u_m)\right\vert \\
& \geq \frac{ K(\varepsilon_1,\varepsilon_1)}{K_{2}\left(0,0\right)}\left(1-\frac{9\pi^2C_{0,0}}{2\nu^3K_{2}\left(0,0\right)-9\pi^2C_{0,0}}\right)-\frac{24\pi^2C_{0,0}}{2K_{2}\left(0,0\right)\nu^3-9\pi^2C_{0,0}}\\
&-\left(C_{1,0}\Vert \mathbf{b}\Vert_\infty+C_{0,1}\Vert \mathbf{c}\Vert_\infty\right)\left(1+\frac{12\pi^2}{\nu^3}\right).
\end{split}
\end{equation}
Thus it is clear that for sufficiently large $\nu$, $q(t,u)>-1$. This completes the proof.

\subsubsection{Proof of Lemma \ref{lemma:2d_2}}
Fix $(t,u)\in S^C$ (see (\ref{eq:S})), and denote $(t_k,u_k):=\min \{ \max\{\vert t-t_m\vert,\vert u-u_m\vert\} \thinspace: \thinspace(t_m,u_m)\in T_2\}$. Then, from (\ref{eq:q2def}), (\ref{eq:sum_K}) and the properties of bivariate admissible kernel
\begin{align*}
\left\vert q\left(  t,u\right)  \right\vert  & \leq\left\Vert \mathbf{a}%
\right\Vert _{\infty}\left(  \left\vert K_2\left(  t-t_{k}\right)  \right\vert
+\sum_{m\neq k}%
\left\vert K_2\left(  t-t_{m},u-u_{m}\right)  \right\vert \right)  \\
&\qquad+\left\Vert \mathbf{b}\right\Vert _{\infty}\sum_{m}\left\vert K_2^{^{(1,0)}%
}\left(  t-t_{m},u-u_{m}\right)  \right\vert+\left\Vert \mathbf{c}\right\Vert _{\infty}\sum_{m}\left\vert K_2^{^{(0,1)}%
}\left(  t-t_{m},u-u_{m}\right)  \right\vert  \\
& \leq \frac{2 \max\left\{ K_2\left(
\varepsilon_1,0\right), K_2\left(
0,\varepsilon_1\right)\right\}\nu^3}{2K_{2}\left(0,0\right)\nu^3-9\pi^2C_{0,0}}  + \frac{24\pi^2C_{0,0}}{2K_{2}\left(0,0\right)\nu^3-9\pi^2C_{0,0}}\\
&\qquad +\left( C_{1,0}\left\Vert \mathbf{b}\right\Vert _{\infty}+C_{0,1}\left\Vert \mathbf{c}\right\Vert _{\infty}\right)\left(  \frac{1}{\left(1+\varepsilon_1^2\right)^{\frac{3}{2}}}+\frac{12\pi^2}{\nu^3}\right) .
 \end{align*}

By (\ref{eq:b}) and (\ref{eq:c}) and since $ \max\left\{ K_2\left(
\varepsilon_1,0\right), K_2\left(
0,\varepsilon_1\right)\right\}  <K_2\left(  0,0\right)
$ we conclude that for sufficiently large $\nu$,
$\left\vert q\left(  t,u\right)  \right\vert <1$ for all $(t,u)\in S^C$. This completes the proof.

\subsection{Proof of Theorem \ref{Theorem2}}

Let $\hat{x}$ be the solution of the optimization
problem (\ref{6}) with $\left\Vert \hat{x}\right\Vert _{1}%
\leq\left\Vert x\right\Vert _{1}$ and let $h\left[  k\right] : =\hat{x}\left[  k\right]  -x\left[  k\right]$.
We decompose $h$  as
\begin{equation*}
h =h_T+h_{T^c}, 
\end{equation*} 
where $h_T$ is the part of the sequence $h $ with support in $T:=\{k_m\}$. If $h_T=0$, then $h=0$. Otherwise, $h_{T^C}\neq 0$ which implies the contradiction $\left\Vert \hat{x}\right\Vert _{1}%
>\left\Vert x\right\Vert _{1}$. 

The discrete support of the delays is identified as $\{t_m\}=\{k_m/N\}$ and it satisfies the condition $|t_j-t_k|\ge \nu \sigma$,
for $j \neq k$. Therefore, the set $T_\sigma:=\{t_m/\sigma \}=\{k_m/N\sigma \}$, satisfies a separation condition with $\nu$.
We have shown that under this separation condition, there exists $q$ of the
form (\ref{11}), corresponding to the interpolating conditions $q(t_m/\sigma)=q(k_m/N\sigma)=sgn(h_{T}[k_m])$
(see (\ref{8})) and also satisfying $|q(t)|< 1$ for $t \notin T_\sigma$ (see (\ref{9})). Therefore, we have that

\[
q_\sigma(t):=q(t/\sigma) = \sum_m {a_m K_\sigma \left({t-\frac{k_m}{N}}\right) + b_m (K^{(1)})_\sigma \left( {t - \frac{k_m}{N}}\right)},
\]
satisfies the interpolation conditions
\begin{equation} \label{dis-q-int}
q_\sigma[k_m]:=q_\sigma\left({\frac{k_m}{N}}\right)=sgn(h_{T}[k_m]), \qquad \forall k_m \in T,
\end{equation}
and
\begin{equation} \label{dis-q-small}
|q_\sigma[k]|<1, \qquad \forall k \notin T.
\end{equation}
Then, with $K_{\sigma}^{1}[k]:= (K^{(1)})_\sigma(k/N)$
\begin{align}
\left\vert \sum_{k\in\mathbb{Z}}q_\sigma\left[  k\right]  h\left[  k\right]  \right\vert  &
=\left\vert \sum_{k\in\mathbb{Z}}\left(  \sum_{k_m\in T}\left(  a_{m}K_{\sigma}\left[
k-k_{m}\right]  +b_{m}K_{\sigma}^{1}\left[  k-k_{m}\right]
\right)  \right)  h\left[  k\right]  \right\vert \nonumber\\
&  \leq\left\Vert \mathbf{a}\right\Vert _{\infty}\sum_{k_m\in T}\left\vert \sum
_{k\in\mathbb{Z}}K_{\sigma}\left[  k-k_{m}\right]  h\left[  k\right]  \right\vert\nonumber \\
&+\left\Vert \mathbf{b}\right\Vert _{\infty}\sum_{k_m\in T}\left\vert \sum
_{k\in\mathbb{Z}}K_{\sigma}^{1}\left[  k-k_{m}\right]  h\left[  k\right]
\right\vert. \label{30}
\end{align}
Observe that (\ref{5}) and (\ref{6}) give
\begin{align}
\left\Vert K_{\sigma}\ast h
\right\Vert _{1} &  =\left\Vert K_{\sigma}\ast(\hat{x}-x)\right\Vert_1  \nonumber \\
& \leq \left\Vert y- K_{\sigma}\ast x\right\Vert_1 + \left\Vert y- K_{\sigma}\ast\hat{x}\right\Vert_1 \nonumber \\
&\leq 2\delta. \label{28}%
\end{align}
Now, using (\ref{28}) we have%
\[
\sum_{k_m\in T}\left\vert \sum_{k\in\mathbb{Z}}K_{\sigma}\left[  k-k_{m}\right]  h\left[
k\right]  \right\vert \leq\sum_{r\in\mathbb{Z}}\left\vert \sum_{k\in\mathbb{Z}}K_{\sigma}\left[
k-r\right]  h\left[  k\right]  \right\vert \leq2\delta.
\]
From the admissible kernel properties (see Definition \ref{Definition2}) we get%
\[
\left\vert K_{\sigma}^{1}\left[  k-k_{m}\right]  \right\vert
=\left\vert K^{\left(  1\right)  }\left(  \frac{k-k_{m}}{N\sigma}\right)
\right\vert \leq\frac{C_{1}}{1+\left(  \frac{k-k_{m}}{N\sigma}\right)  ^{2}}.%
\]
Using the separation condition $\left\vert k_{i}-k_{j} \right\vert \geq\nu N \sigma$, $\forall k_i,k_j \in T$, 
we can estimate for any $k$
\begin{equation*}
\sum_{k_m\in T}\frac{1}{1+\left(
\frac{k-k_{m}}{N\sigma}\right)  ^{2}}< 2\left(1 + E(\nu)\right),
\end{equation*} 
where $E(\nu):=\pi^2 / 6\nu^2$ (see (\ref{eq:E_def})).
Then, 
\begin{align*}
\sum_{k_m\in T}\left\vert \sum_{k\in\mathbb{Z}}K_{\sigma}^{1}\left[
k-k_{m}\right]  h\left[  k\right]  \right\vert  &  \leq C_{1}\sum
_{k\in\mathbb{Z}}\left\vert h\left[  k\right]  \right\vert \sum_{k_m\in T}\frac{1}{1+\left(
\frac{k-k_{m}}{N\sigma}\right)  ^{2}}\\
&  <2C_{1}(1+E\left(  \nu\right))\left\Vert h\right\Vert _{1}.
\end{align*}
Substituting in (\ref{30}) we get%
\[
\left\vert \sum_{k\in\mathbb{Z}}q_\sigma\left[  k\right]  h\left[  k\right]  \right\vert
\leq2\delta\left\Vert \mathbf{a}\right\Vert _{\infty}+2C_{1}(1+E\left(
\nu\right))\left\Vert\mathbf{b}\right\Vert _{\infty}\left\Vert h\right\Vert _{1}.
\]
On the other hand, from (\ref{dis-q-int}) and (\ref{dis-q-small}) we get
\begin{align*}
\left\vert \sum_{k\in\mathbb{Z}}q_\sigma\left[  k\right]  h\left[  k\right]  \right\vert  &
=\left\vert \sum_{k\in\mathbb{Z}}q_\sigma\left[  k\right]  \left(  h_{T}\left[  k\right]
+h_{T^{C}}\left[  k\right]  \right)  \right\vert \\
&  \geq\left\Vert h_{T}\right\Vert _{1}-\underset{k\in\mathbb{Z}\backslash
T}{\max}\left\vert q_\sigma\left[  k\right]  \right\vert \left\Vert h_{T^{C}%
}\right\Vert _{1}.
\end{align*}
Combining the two inequalities, we get%
\begin{equation}
\left\Vert h_{T}\right\Vert _{1}-\underset{k\in\mathbb{Z}\backslash T}%
{\max\left\vert q_\sigma\left[  k\right]  \right\vert }\left\Vert h_{T^{C}%
}\right\Vert _{1}\leq2\delta\left\Vert \mathbf{a}\right\Vert _{\infty
}+2C_{1}(1+E(\nu))\left\Vert \mathbf{b}\right\Vert _{\infty}\left\Vert h\right\Vert
_{1}.  \label{31}%
\end{equation}
Assume $\left\vert k-k_{m}\right\vert \leq\varepsilon N\sigma$, $k \ne k_m$, for some $k_m \in T$. By (\ref{eq:q_up}) we observe that
\begin{equation*}
\left\vert {q_\sigma[k]}\right\vert=\left\vert { q\left( {\frac{k}{N\sigma}} \right)} \right\vert \le  1 - 
\frac{\beta }{4K(0)\left(N\sigma\right)^2}.
\end{equation*}
For the case $\left\vert k-k_{m}\right\vert > \varepsilon N\sigma$, for all $k_m \in T$, we apply (\ref{eq:q_far}), to derive
\begin{equation*}
\left\vert {q_\sigma[k]}\right\vert=\left\vert { q\left( {\frac{k}{N\sigma}} \right)} \right\vert \le  1 - 
\frac{\beta \varepsilon^2}{4K(0)}.
\end{equation*}
Combining these last two estimates gives a uniform estimate for sufficiently large $\nu$
\begin{equation*}
\underset{k\in\mathbb{Z}\backslash
T}{\max}\left\vert q_\sigma\left[  k\right]  \right\vert \leq  1 - \frac{\beta }{4K(0)\gamma^2},
\end{equation*}
where $\gamma:=\max \{ N\sigma, \varepsilon^{-1}\}$. Substituting into (\ref{31}) we get%
\begin{equation} \label{two-side-2}
\left\Vert h_{T}\right\Vert _{1}-\left(1 - \frac{\beta}{4K(0)\gamma^2}\right)  \left\Vert h_{T^{C}}\right\Vert _{1}\leq
2\delta\left\Vert \mathbf{a}\right\Vert _{\infty}+2C_1(1+E\left(  \nu\right))\left\Vert \mathbf{b}%
\right\Vert _{\infty}\left\Vert h\right\Vert _{1}.
\end{equation}
We also have from (\ref{6})
\begin{align*}
\left\Vert x\right\Vert _{1} &  \geq\left\Vert x+h\right\Vert _{1}=\left\Vert
x+h_{T}\right\Vert _{1}+\left\Vert h_{T^{C}}\right\Vert _{1}\\
&  \geq\left\Vert x\right\Vert _{1}-\left\Vert h_{T}\right\Vert _{1}%
+\left\Vert h_{T^{C}}\right\Vert _{1},%
\end{align*}
leading to
\[
\left\Vert h_{T^{C}}\right\Vert _{1}\leq\left\Vert h_{T}\right\Vert _{1}.%
\]
Applying this with (\ref{two-side-2}) yields 
\begin{align*}
\left\Vert h\right\Vert _{1} &  =\left\Vert h_{T^{C}}\right\Vert
_{1}+\left\Vert h_{T}\right\Vert _{1} \\
&\leq2\left\Vert h_{T}\right\Vert _{1}\\
&  \leq\frac{8K(0)\gamma^2}{\beta}\left(  \delta\left\Vert
\mathbf{a}\right\Vert _{\infty}+C_{1}(1+E\left(  \nu\right))\left\Vert \mathbf{b}\right\Vert _{\infty
}\left\Vert h\right\Vert _{1} \right).
\end{align*}
This gives
\[
\left\| h \right\|_{1} \le \frac{8K\left( 0 \right)\gamma^2 \left\| a \right\|_{\infty } }{\beta -8K\left( 0 \right)
\gamma^2 C_{1} \left( {1+E\left( \nu \right)} \right)\left\| b 
\right\|_{\infty } }\delta \]
Then using (\ref{23}),(\ref{24}),(\ref{13}) and (\ref{eq:E_def}) we get 
\begin{equation*}
\left\Vert h\right\Vert _{1}\leq\frac{72K(0)\left\vert K^{(2)}(0)\right\vert\gamma^2}{9\beta K(0)\left\vert K^{(2)}(0)\right\vert-D_1\nu^{-2}-D_2\nu^{-4}}\delta,
\end{equation*}
where 
\begin{eqnarray*}
D_{1} & = & 3\pi^2 \left(C_2\beta +2C_0\beta  +8C_1^2K(0)\gamma^2 \right) ,\\
D_{2} & = &4\pi^4C_1^2K(0)\gamma^2.
\end{eqnarray*}
This completes the proof.

\section{Numerical Experiments} \label{Section4}
We performed extensive numerical experiments to validate the theoretical results.
All experiments approximate the TV minimization by solving the appropriate $\ell_{1}$ minimization using CVX  \cite{cvx}.
The signals were generated in two steps. First, random locations were sequentially added to the signal's
support in the interval $[-1,1]$ with discretization step of 0.01, while keeping the separation condition. 
 Once the support was determined, the amplitudes were drawn randomly
from an i.i.d normal distribution with standard deviation suits to the desired signal to noise (SNR) ratio.

The first experiment aims to estimate empirically the minimal separation constant $\nu$  (see Definition \ref{Definition1}) in a noise-free environment for different admissible kernels. As can be seen in Figure \ref{fig1}, for Cauchy kernel it suffices to set $\nu=0.45$, whereas Gaussian kernel requires separation constant of $\nu=1.1$ (see also Table \ref{table1}).
\begin{figure}[h!]
\begin{centering}
\includegraphics[scale=0.5]{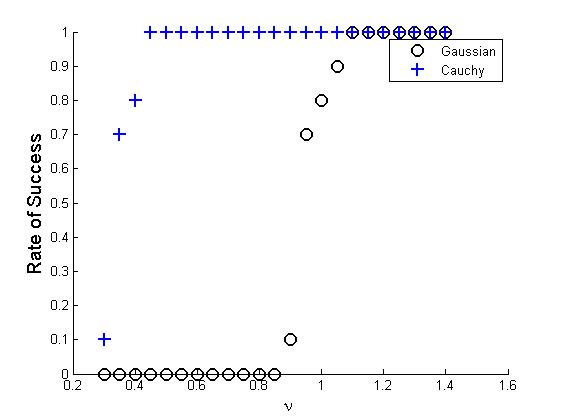}
\par\end{centering}

\protect\caption{Support detection as a function of the separation constant $\nu$ for Cauchy and Gaussian kernels. For each
value of $\nu$, 10 experiments were conducted with $\sigma=0.1$. An experiment is called
a success if it detects the signal's support precisely. Rate of success
1 means that the support was located precisely for all 10 experiments. }
 \label{fig1}
\end{figure}

Figure \ref{fig:example} presents two examples for atomic decomposition of stream of Cauchy delays with $\sigma=0.1$ and  $\nu=0.7$, contaminated with Gaussian noise (SNRs of $17.9$ and $27.5$ db). We mention that in order to achieve good recovery results we had to increase the separation constant. As can be seen, the solution in some cases misses the small delays of the signal, which are at the level of the noise, but manages to recover the larger delays with high accuracy. 
 

\begin{figure}[h!]
\begin{center}$
\begin{array}{cc}
\includegraphics[scale=0.45]{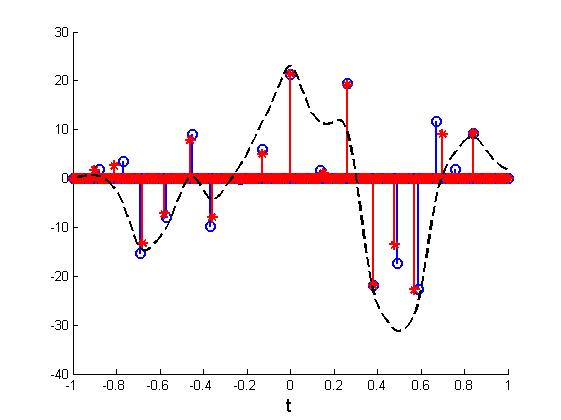}&
\includegraphics[scale=0.45]{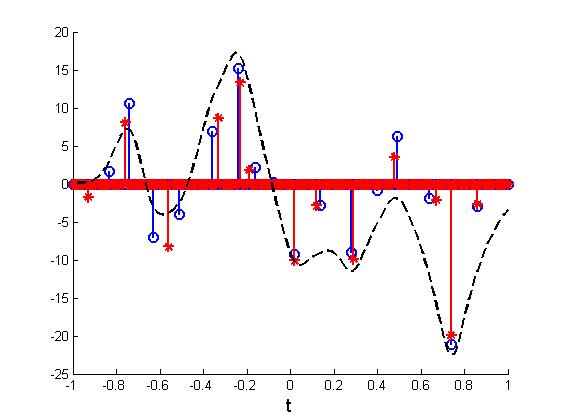} \\
\mbox{(a)} & \mbox{(b)}
\end{array}$
\end{center}
\protect\caption{\label{fig:example}Atomic decomposition of stream of Cauchy delays
with $\sigma=0.1$ and a separation constant of $\nu=0.7$ from noisy measurements with: (a) $\delta=20$ (SNR$=27.5$
db) (b) $\delta=60$ (SNR$=17.9$
db).  The black line represents the noisy measurements, 
the blue circles represent the original signal, and the red asterisks
the recovered signal.}
\end{figure}


\section{Conclusions}\label{Section5}

In this paper, we have shown that  a standard convex optimization technique can robustly decompose a stream of pulses into its atoms. The localization properties of the decomposition were derived in \cite{bendory2015stable}. This holds provided that the distance between the atoms is of the order of $\sigma$ (see (\ref{1})), and that the kernel satisfies mild localization conditions.
 In contrast to previous works, our method is stable and relies on theoretical results on the continuum, implying that there is no limitation on the discretization step. In a consecutive paper \cite{bendory2015robust}, it was proven that the separation is unnecessary if the underlying signal is known to be positive (i.e. $c_m>0$). In this case, the separation can be replaced by a weaker condition of Rayleigh regularity.

Our results show that the minimal separation needed for the success of the recovery depends on the global and the local properties of the kernel. It seems that the degree of the kernel's concavity near the origin has a particular importance, i.e. a `flat' kernel near the origin requires higher separation.

We have showed explicitly that our technique applies to univariate and bivariate signals. We strongly believe that this result holds in higher dimensions since parts of the proof can be easily generalized to any dimension. However, there
are certain technical challenges which we hope to overcome in future work.

This work is part of an ongoing effort to prove and demonstrate the effectiveness of convex optimization techniques to  robustly recover signals from their projections onto polynomial spaces \cite{candes2013towards,bendory2015super,bendory2013Legendre} and from their convolution with known kernels. The projection of signals onto spaces generated by shifts of one function or shifts and dilations of one function were investigated extensively in the  literature \cite{villemoes1994wavelet,deboor1994structure,de1994approximation,jia1993stability,filippov1995representation,cavaretta1991stationary,novikov2011wavelet,buhmann1999identifying,pinkus2013smoothness,terekhin2009affine} and found many applications (see for instance \cite{mallat1999wavelet,eldar2015sampling}). An interesting question is whether similar convex optimization techniques can be applied for the recovery of signals from these projections. We leave this question for a future research.

\subsection*{Acknowledgements}
The authors thank Yonina Eldar and Gongguo Tang for helpful discussions and to the referees for valuable remarks.

\bibliographystyle{plain}
\bibliography{bib}

\begin{thebibliography}{10}

\bibitem{azais2015spike}
J.~Azais, Y.~De~Castro, and F.~Gamboa.
\newblock Spike detection from inaccurate samplings.
\newblock {\em Applied and Computational Harmonic Analysis}, 38(2):177--195,
  2015.

\bibitem{Bar-IlanSub-Nyquis}
O.~Bar-Ilan and Y.C. Eldar.
\newblock Sub-nyquist radar via doppler focusing.
\newblock {\em IEEE Transactions on Signal Processing}, 62(7):1796--1811, 2014.

\bibitem{beck2014introduction}
A.~Beck.
\newblock {\em Introduction to Nonlinear Optimization: Theory, Algorithms, and
  Applications with MATLAB}, volume~19.
\newblock SIAM, 2014.

\bibitem{bendory2015robust}
T.~Bendory.
\newblock Robust recovery of positive stream of pulses.
\newblock {\em arXiv preprint arXiv:1503.08782}, 2015.

\bibitem{bendory2015stable}
T.~Bendory, A.~Bar-Zion, D.~Adam, S.~Dekel, and A.~Feuer.
\newblock Stable support recovery of stream of pulses with application to
  ultrasound imaging.
\newblock {\em To appear in IEEE Transactions on Signal Processing}, 2016.

\bibitem{bendory2013Legendre}
T.~Bendory, S.~Dekel, and A.~Feuer.
\newblock Exact recovery of non-uniform splines from the projection onto spaces
  of algebraic polynomials.
\newblock {\em Journal of Approximation Theory}, 182(0):7 -- 17, 2014.

\bibitem{bendory2013exact}
T.~Bendory, S.~Dekel, and A.~Feuer.
\newblock Exact recovery of dirac ensembles from the projection onto spaces of
  spherical harmonics.
\newblock {\em Constructive Approximation}, 42:183--207, 2015.

\bibitem{bendory2015super}
T.~Bendory, S.~Dekel, and A.~Feuer.
\newblock Super-resolution on the sphere using convex optimization.
\newblock {\em Signal Processing, IEEE Transactions on}, 63(9):2253--2262,
  2015.

\bibitem{bendory2015recovery}
T.~Bendory and Y.C. Eldar.
\newblock Recovery of sparse positive signals on the sphere from low resolution
  measurements.
\newblock {\em Signal Processing Letters, IEEE}, 22(12):2383--2386, 2015.

\bibitem{bhaskar2011atomic}
B.N. Bhaskar, G.~Tang, and B.~Recht.
\newblock Atomic norm denoising with applications to line spectral estimation.
\newblock {\em Signal Processing, IEEE Transactions on}, 61(23):5987--5999,
  2013.

\bibitem{buhmann1999identifying}
M.~Buhmann and A.~Pinkus.
\newblock Identifying linear combinations of ridge functions.
\newblock {\em Advances in Applied Mathematics}, 22(1):103--118, 1999.

\bibitem{candes2013super}
E.J. Cand{\`e}s and C.~Fernandez-Granda.
\newblock Super-resolution from noisy data.
\newblock {\em Journal of Fourier Analysis and Applications}, 19(6):1229--1254,
  2013.

\bibitem{candes2013towards}
E.J. Cand{\`e}s and C.~Fernandez-Granda.
\newblock Towards a mathematical theory of super-resolution.
\newblock {\em Communications on Pure and Applied Mathematics}, 2013.

\bibitem{cavaretta1991stationary}
A.~Cavaretta, W.~Dahmen, and C.~Micchelli.
\newblock {\em Stationary subdivision}, volume 453.
\newblock American Mathematical Soc., 1991.

\bibitem{chi2011sensitivity}
Y.~Chi, L.L. Scharf, A.~Pezeshki, and A.R. Calderbank.
\newblock Sensitivity to basis mismatch in compressed sensing.
\newblock {\em Signal Processing, IEEE Transactions on}, 59(5):2182--2195,
  2011.

\bibitem{de1994approximation}
C.~De~Boor, R.A. DeVore, and A.~Ron.
\newblock Approximation from shift-invariant subspaces of ${L}_2({R}^d)$.
\newblock {\em Transactions of the American Mathematical Society},
  341(2):787--806, 1994.

\bibitem{deboor1994structure}
C.~De~Boor, R.A. DeVore, and A.~Ron.
\newblock The structure of finitely generated shift-invariant spaces in
  ${L}_2({R}^d)$.
\newblock {\em Journal of Functional Analysis}, 119(1):37--78, 1994.

\bibitem{de2012exact}
Y.~De~Castro and F.~Gamboa.
\newblock Exact reconstruction using beurling minimal extrapolation.
\newblock {\em Journal of Mathematical Analysis and applications},
  395(1):336--354, 2012.

\bibitem{de2015exact}
Y.~De~Castro, F.~Gamboa, D.~Henrion, and J.B. Lasserre.
\newblock Exact solutions to super resolution on semi-algebraic domains in
  higher dimensions.
\newblock {\em arXiv preprint arXiv:1502.02436}, 2015.

\bibitem{de2014non}
Y.~De~Castro and G.~Mijoule.
\newblock Non-uniform spline recovery from small degree polynomial
  approximation.
\newblock {\em Journal of Mathematical Analysis and Applications}, 2015.

\bibitem{donoho2006compressed}
D.L. Donoho.
\newblock Compressed sensing.
\newblock {\em Information Theory, IEEE Transactions on}, 52(4):1289--1306,
  2006.

\bibitem{dragotti2007sampling}
P.L. Dragotti, M.~Vetterli, and T.~Blu.
\newblock Sampling moments and reconstructing signals of finite rate of
  innovation: Shannon meets strang--fix.
\newblock {\em Signal Processing, IEEE Transactions on}, 55(5):1741--1757,
  2007.

\bibitem{dumitrescu2007positive}
B.~Dumitrescu.
\newblock {\em Positive trigonometric polynomials and signal processing
  applications}.
\newblock Springer, 2007.

\bibitem{duval2015exact}
V.~Duval and G.~Peyr{\'e}.
\newblock Exact support recovery for sparse spikes deconvolution.
\newblock {\em Foundations of Computational Mathematics}, pages 1--41, 2015.

\bibitem{duval2015sparse}
V.~Duval and G.~Peyr{\'e}.
\newblock Sparse spikes deconvolution on thin grids.
\newblock {\em arXiv preprint arXiv:1503.08577}, 2015.

\bibitem{elad2010sparse}
M.~Elad.
\newblock {\em Sparse and redundant representations: from theory to
  applications in signal and image processing}.
\newblock Springer, 2010.

\bibitem{eldar2015sampling}
Y.C. Eldar.
\newblock {\em Sampling Theory: Beyond Bandlimited Systems}.
\newblock Cambridge University Press, 2015.

\bibitem{fernandez2015super}
Carlos Fernandez-Granda.
\newblock Super-resolution of point sources via convex programming.
\newblock {\em arXiv preprint arXiv:1507.07034}, 2015.

\bibitem{filippov1995representation}
V.I. Filippov and P.~Oswald.
\newblock Representation in lp by series of translates and dilates of one
  function.
\newblock {\em Journal of Approximation Theory}, 82(1):15--29, 1995.

\bibitem{cvx}
M.~Grant and S.~Boyd.
\newblock {CVX}: Matlab software for disciplined convex programming, version
  2.1, March 2014.

\bibitem{hornmatrix}
R.A. Horn and C.R. Johnson.
\newblock Matrix analysis. 1985.
\newblock {\em Cambridge}.

\bibitem{hua1990matrix}
Y.~Hua and T.K. Sarkar.
\newblock Matrix pencil method for estimating parameters of exponentially
  damped/undamped sinusoids in noise.
\newblock {\em Acoustics, Speech and Signal Processing, IEEE Transactions on},
  38(5):814--824, 1990.

\bibitem{jia1993stability}
R.Q. Jia and J.~Wang.
\newblock Stability and linear independence associated with wavelet
  decompositions.
\newblock {\em Proceedings of the American mathematical society},
  117(4):1115--1124, 1993.

\bibitem{liao2014music}
W.~Liao and A.~Fannjiang.
\newblock Music for single-snapshot spectral estimation: Stability and
  super-resolution.
\newblock {\em Applied and Computational Harmonic Analysis}, 2014.

\bibitem{mallat1999wavelet}
S.~Mallat.
\newblock {\em A wavelet tour of signal processing}.
\newblock Academic press, 1999.

\bibitem{mishra2015}
K.V. Mishra, C.~Myung, A.~Kruger, and X.~Weiyu.
\newblock Spectral super-resolution with prior knowledge.
\newblock {\em Signal Processing, IEEE Transactions on}, 63(20):5342--5357,
  2015.

\bibitem{moitra2014threshold}
A.~Moitra.
\newblock The threshold for super-resolution via extremal functions.
\newblock {\em arXiv preprint arXiv:1408.1681}, 2014.

\bibitem{novikov2011wavelet}
I.~Novikov, V.~Protasov, and M.~Skopina.
\newblock {\em Wavelet theory}, volume 239.
\newblock American Mathematical Soc., 2011.

\bibitem{pinkus2013smoothness}
A~Pinkus.
\newblock Smoothness and uniqueness in ridge function representation.
\newblock {\em Indagationes Mathematicae}, 24(4):725--738, 2013.

\bibitem{roy1989esprit}
R.~Roy and T.~Kailath.
\newblock Esprit-estimation of signal parameters via rotational invariance
  techniques.
\newblock {\em Acoustics, Speech and Signal Processing, IEEE Transactions on},
  37(7):984--995, 1989.

\bibitem{rudin1986real}
W.~Rudin.
\newblock {\em Real and complex analysis (3rd)}.
\newblock New York: McGraw-Hill Inc, 1986.

\bibitem{schiebinger2015superresolution}
G.~Schiebinger, E.~Robeva, and B.~Recht.
\newblock Superresolution without separation.
\newblock {\em arXiv preprint arXiv:1506.03144}, 2015.

\bibitem{schmidt1986multiple}
R.O. Schmidt.
\newblock Multiple emitter location and signal parameter estimation.
\newblock {\em Antennas and Propagation, IEEE Transactions on}, 34(3):276--280,
  1986.

\bibitem{stoica2005spectral}
P.~Stoica and R.L Moses.
\newblock {\em Spectral analysis of signals}.
\newblock Pearson/Prentice Hall Upper Saddle River, NJ, 2005.

\bibitem{just_dis}
G.~Tang, B.N. Bhaskar, and B.~Recht.
\newblock Sparse recovery over continuous dictionaries-just discretize.
\newblock In {\em Signals, Systems and Computers, 2013 Asilomar Conference on},
  pages 1043--1047, Nov 2013.

\bibitem{tang2013near}
G.~Tang, B.N. Bhaskar, and B.~Recht.
\newblock Near minimax line spectral estimation.
\newblock {\em Information Theory, IEEE Transactions on}, 61(1):499--512, 2015.

\bibitem{tang2012compressive}
G.~Tang, B.N. Bhaskar, P.~Shah, and B.~Recht.
\newblock Compressed sensing off the grid.
\newblock {\em Information Theory, IEEE Transactions on}, 59(11):7465--7490,
  2013.

\bibitem{tang2013atomic}
G.~Tang and B.~Recht.
\newblock Atomic decomposition of mixtures of translation-invariant signals.
\newblock {\em IEEE CAMSAP}, 2013.

\bibitem{terekhin2009affine}
P.A Terekhin.
\newblock Affine synthesis in the space ${L}_2({R}^d)$.
\newblock {\em Izvestiya: Mathematics}, 73(1):171, 2009.

\bibitem{tur2011innovation}
R.~Tur, Y.C. Eldar, and Z.~Friedman.
\newblock Innovation rate sampling of pulse streams with application to
  ultrasound imaging.
\newblock {\em Signal Processing, IEEE Transactions on}, 59(4):1827--1842,
  2011.

\bibitem{vetterli2002sampling}
M.~Vetterli, P.~Marziliano, and T.~Blu.
\newblock Sampling signals with finite rate of innovation.
\newblock {\em Signal Processing, IEEE Transactions on}, 50(6):1417--1428,
  2002.

\bibitem{villemoes1994wavelet}
L.F. Villemoes.
\newblock Wavelet analysis of refinement equations.
\newblock {\em SIAM Journal on Mathematical Analysis}, 25(5):1433--1460, 1994.

\bibitem{wagner2012compressed}
N.~Wagner, Y.~C Eldar, and Z.~Friedman.
\newblock Compressed beamforming in ultrasound imaging.
\newblock {\em Signal Processing, IEEE Transactions on}, 60(9):4643--4657,
  2012.

\bibitem{zhang2006schur}
F.~Zhang.
\newblock {\em The Schur complement and its applications}, volume~4.
\newblock Springer Science \& Business Media, 2006.

\end{thebibliography}
  \end{document}